\documentclass[11pt]{article}
\usepackage{fullpage}
\usepackage{times}
\usepackage{subfigure}
\usepackage{color}
\usepackage{url}
\usepackage{graphicx}
\usepackage{amsmath}
\usepackage{amssymb}

\textheight 9.1in
\textwidth 6.6in

\setlength{\unitlength}{1cm}

\newcommand{\qed}{\mbox{}\hspace*{\fill}\nolinebreak\mbox{$\rule{0.6em}{0.6em}$}
}
\newcommand{\expect}{{\bf \mbox{\bf E}}}
\newcommand{\prob}{{\bf \mbox{\bf Pr}}}

\definecolor{gray}{rgb}{0.5,0.5,0.5}
\newcommand{\e}{{\epsilon}}
\newcommand{\eps}{{\epsilon}}

\newtheorem{theorem}{Theorem}[section]
\newtheorem{lemma}[theorem]{Lemma}

\newtheorem{corollary}[theorem]{Corollary}
\newtheorem{definition}[theorem]{Definition}
\newtheorem{remark}[theorem]{Remark}

\newenvironment{proof}{{\bf Proof:}}{$\qed$\par}

\newenvironment{proofof}[1]{\noindent{\bf Proof of #1:}}{$\qed$\par}

\usepackage{hyperref}
\hypersetup{
bookmarksnumbered
}

\newcommand{\refine}{{\bf \mbox{\sc{Refine}}}}

\newcommand{\find}{{\bf \mbox{\sc{Find}}}}
\newcommand{\union}{{\bf \mbox{\sc{Union}}}}
\begin{document}

\title{\Large Graph Sparsification via Refinement Sampling}
\setcounter{page}{0}
\author{Ashish Goel\thanks{
    Departments of Management Science and Engineering and (by courtesy)
    Computer Science, Stanford University.
    Email: {\tt ashishg@stanford.edu}.
    Research supported in part by NSF award IIS-0904325.}\\
\and Michael Kapralov\thanks{
    Institute for Computational and Mathematical Engineering, Stanford University.
    Email: {\tt kapralov@stanford.edu}. Research supported by a Stanford Graduate Fellowship.}\\
\and Sanjeev Khanna\thanks{Department of Computer and Information Science, University of Pennsylvania,
Philadelphia PA. Email: {\tt sanjeev@cis.upenn.edu}.
Supported in
part by NSF Awards CCF-0635084 and IIS-0904314.}
}

\maketitle
\thispagestyle{empty}
\pdfbookmark[1]{Abstract}{MyAbstract}
\begin{abstract}
  A graph $G'(V,E')$ is an {\em $\eps$-sparsification} of $G$ for some $\eps >
  0$, if every (weighted) cut in $G'$ is within $(1\pm \e)$ of the
  corresponding cut in $G$. A celebrated result of Bencz\'{u}r and Karger
  shows that for every undirected graph $G$, an $\eps$-sparsification with
  $O(n \log n/\e^2)$ edges can be constructed in $O(m\log^2n)$ time. The
  notion of cut-preserving graph sparsification has played an important role
  in speeding up algorithms for several fundamental network design and routing
  problems. Applications to modern massive data sets often constrain
  algorithms to use computation models that restrict random access to the
  input. The semi-streaming model, in which the algorithm is constrained to
  use $\tilde O(n)$ space, has been shown to be a good abstraction for
  analyzing graph algorithms in applications to large data sets. Recently, a
  semi-streaming algorithm for graph sparsification was presented by Anh and
  Guha; the total running time of their implementation is $\Omega(mn)$, too
  large for applications where both space and time are important. In this
  paper, we introduce a new technique for graph sparsification, namely {\em
    refinement sampling}, that gives an $\tilde{O}(m)$ time semi-streaming
  algorithm for graph sparsification.

  Specifically, we show that refinement sampling can be used to design a
  one-pass streaming algorithm for sparsification that takes $O(\log\log n)$
  time per edge, uses $O(\log^2 n)$ space per node, and outputs an
  $\eps$-sparsifier with $O(n\log^3 n/\e^2)$ edges.  At a slightly increased
  space and time complexity, we can reduce the sparsifier size to $O(n \log
  n/\e^2)$ edges matching the Bencz\'{u}r-Karger result, while improving upon
  the Bencz\'{u}r-Karger runtime for $m=\omega(n\log^3 n)$.  Finally, we show
  that an $\eps$-sparsifier with $O(n \log n/\e^2)$ edges can be constructed
  in two passes over the data and $O(m)$ time whenever $m =
  \Omega(n^{1+\delta})$ for some constant $\delta > 0$. As a by-product of our approach,
  we also obtain an $O(m \log\log n+ n \log n)$ time streaming algorithm to compute
  a sparse $k$-connectivity certificate of a graph.

\end{abstract}

\newpage
\section{Introduction}

The notion of graph sparsification was introduced in \cite{benczurkarger96},
where the authors gave a near linear time procedure that takes as input an
undirected graph $G$ on $n$ vertices and constructs a weighted subgraph $H$ of
$G$ with $O(n\log n/\e^2)$ edges such that the value of every cut in $H$ is
within a $1\pm \e$ factor of the value of the corresponding cut in $G$. This
algorithm has subsequently been used to speed up algorithms for finding
approximately minimum or sparsest cuts in graphs (\cite{benczurkarger96,
  krv06}), as well as in a host of other applications (e.g. \cite{kl02}). A
more general class of spectral sparsifiers was recently introduced by Spielman
and Srivastava in \cite{ss:sample2008}. The algorithms developed in
\cite{benczurkarger96} and \cite{ss:sample2008} take near-linear time in the
size of the graph and produce very high quality sparsifiers, but require
random access to the edges of the input graph $G$, which is often
prohibitively expensive in applications to modern massive data sets. The
streaming model of computation, which restricts algorithms to use a small
number of passes over the input and space polylogarithmic in the size of the
input, has been studied extensively in various application domains
(e.g. \cite{b:streaming}), but has proven too restrictive for even the
simplest graph algorithms (even testing $s-t$ connectivity requires
$\Omega(n)$ space). The less restrictive semi-streaming model, in which the
algorithm is restricted to use $\tilde O(n)$ space, is more suited for graph
algorithms~\cite{fkmsz05}. The problem of constructing graph sparsifiers in
the semi-streaming model was recently posed by Anh and Guha~\cite{anh-guha},
who gave a one-pass algorithm for finding Bencz\'{u}r-Karger type sparsifiers
with a slightly larger number of edges than the original Bencz\'{u}r-Karger
algorithm, i.e. $O(n\log n\log\frac{m}{n}/\e^2)$ as opposed to $O(n\log
n/\e^2)$. Their algorithm requires only one pass over the data, and their
analysis is quite non-trivial. However, its time complexity is
$\Omega(mn\mbox{\ polylog}(n))$, making it impractical for applications where
both time and space are important constraints\footnote{As is often the case
  for semi-streaming algorithms, Anh and Guha do not explicitly compute the
  running time of their algorithm; $\Omega(mn\mbox{\ polylog}(n))$ is the best
  running time we can come up with for their algorithm.}

Apart from the issue of random access vs disk, the semi-streaming model is
also important for scenarios where edges of the graph are
revealed one at a time by an external process. For example, this application maps
well to online social networks where edges arrive one by one, but efficient
network computations may be required at any time, making it particularly
useful to have a dynamically maintained sparsifier.

\paragraph{Our results:} We introduce the concept of \emph{refinement
  sampling}. At a high level, the basic idea is to sample edges at
geometrically decreasing rates, using the sampled edges at each rate to refine
the connected components from the previous rate. The sampling rate at which
the two endpoints of an edge get separated into different connected components
is used as an approximate measure of the ``strength'' of that edge.  We use
refinement sampling to obtain two algorithms for computing Bencz\'{u}r-Karger
type sparsifiers of undirected graphs in the semi-streaming model
efficiently. The first algorithm requires $O(\log n)$ passes, $O(\log n)$
space per node, $O(\log n\log\log n)$ work per edge and produces sparsifiers
with $O(n\log^2 n/\e^2)$ edges. The second algorithm requires one pass over
the edges of the graph, $O(\log^2 n)$ space per node, $O(\log\log n)$ work per
edge and produces sparsifiers with $O(n\log^3 n/\e^2)$ edges. Several
properties of these results are worth noting:
\begin{enumerate}
\item In the incremental model, the amortized running time per edge arrival is
  $O(\log \log n)$, which is quite practical and much better than the
  previously best known running time of $\Omega(n)$.
\item The sample size can be improved for both algorithms by running the original
Benc\'{u}r-Karger algorithm on the sampled graph without violating the
restrictions of the semi-streaming model, yielding $O(\log n\log \log
n+(\frac{n}{m})\log^4 n)$ and $O(\log \log n+(\frac{n}{m})\log^5 n)$ amortized work per edge
respectively.
\item Somewhat surprisingly, this two-stage (but still semi-streaming)
  algorithm improves upon the runtime of the original sparsification scheme
  when $m=\omega(n\log^2n)$ for the $O(\log n)$-pass version and
  $m=\omega(n\log^3 n)$ for the one-pass version.
\item As a by-product of our analysis, we show that refinement sampling can be
  regarded as a one-pass algorithm for producing a sparse connectivity
  certificate of a weighted undirected graph (see Corollary
  \ref{cor:sparse-cert}). Thus we obtaining a streaming analog 
  of the Nagamochi-Ibaraki result~\cite{nagamochi-ibaraki} for producing sparse
  certificates, which is in turn used in the Benc\'{u}r-Karger sampling.
\end{enumerate}

Finally, in Section \ref{sec:two-pass} we give an algorithm for
constructing $O(n\log n/\e^2)$-size sparsifiers in $O(m)$ time using two
passes over the input when $m=\Omega(n^{1+\delta})$.

\noindent
\paragraph{Related Work:}
In \cite{anh-guha} the authors give an algorithm
for sparsification in the semi-streaming model based on the observation that one can use the constructed sparsification of the currently received part of the graph to estimate of the strong connectivity of a newly received edge. A brief outline of the algorithm is as follows. Denote the edges of $G$ in their order in the stream by $e_1,\ldots, e_m$. Set $H_0=(V, \emptyset)$. For every $t>0$ compute the strength $s_t$ of $e_t$ in $H_{t-1}$, and with probability $p_{e_t}=\min\{\rho/s_t, 1\}$ set $H_{t}=(V, E(H_{t-1})\cup \{e_t\})$, giving $e_t$ weight $1/p_{e_t}$ in $H_{t}$ and $H_t=H_{t-1}$ otherwise. For every $t$ the graph $H_t$ is an $\e$-sparsification of the subgraph received by time $t$.
The authors show that this algorithm yields an $\e$-sparsifier with $O(n\log n\log\frac{m}{n}/\e^2)$ edges. However, it is unclear how one can calculate the strengths $s_t$ efficiently. A naive implementation would take $\Omega(n)$ time for each $t$, resulting in $\Omega(mn)$ time overall. One could conceivably use the fact that $H_{t-1}$ is always a subgraph of  $H_{t}$, but to the best of our knowledge there are no results on efficiently calculating or approximating \emph{strong} connectivities in the incremental model.

 It is important to emphasize that our techniques for obtaining an efficient one-pass sparsification algorithm are very different from the approach of \cite{anh-guha}. In particular, the structure of dependencies in the sampling process is quite different. In the algorithm of \cite{anh-guha} edges are not sampled independently since the probability with which an edge is sampled depends on the the coin tosses for edges that came earlier in the stream. Our approach, on the other hand, decouples the process of estimating edge strengths from the process of producing the output sample, thus simplifying analysis and making a direct invocation of the Bencz\'{u}r-Karger sampling theorem possible.

\noindent
\paragraph{Organization:}
Section~\ref{sec:prelim} introduces some notation as well as reviews the
Bencz\'{u}r-Karger sampling algorithm. We then introduce in Section~\ref{sec:refinement}
our \emph{refinement sampling} scheme, and show how it can be used to obtain a sparsification algorithm requiring $O(\log n)$ passes and $O(\log n\log\log n)$ work per edge. The size of the sampled graph is $O(n\log^2 n/\e^2)$, i.e. at most $O(\log n)$ times larger than that produced by Bencz\'{u}r-Karger sampling.
Finally, in Section \ref{sec:onepass} we build on the ideas of Section~\ref{sec:refinement}
to obtain a one-pass algorithm with $O(\log \log n)$ work per edge at the expense of increasing the size of the sample to $O(n\log^3 n/\e^2)$.

\section{Preliminaries}
\label{sec:prelim}

We will denote by $G(V, E)$ the input undirected graph with vertex set $V$ and edge set $E$ with $|V|=n$ and $|E|=m$.
For any $\eps > 0$, we say that a weighted graph $G'(V,E')$ is an  {\em $\eps$-sparsification}
of $G$ if every (weighted) cut in $G'$ is within $(1\pm \e)$ of the corresponding
cut in $G$.
Given any two collections of sets that partition $V$, say $S_1$ and $S_2$,
we say that
$S_2$ is a {\em refinement} of $S_1$ if for any $X \in S_1$ and $Y \in S_2$,
either $X \cap Y = \emptyset$ or $Y \subset X$. In other words, $S_1 \cup S_2$
form a laminar set system.

\subsection{Bencz\'{u}r-Karger Sampling Scheme}

We say that a graph is \emph{$k$-connected} if the value of each cut in $G$ is at least $k$.
The Bencz\'{u}r-Karger sampling scheme uses a more strict notion of connectivity, referred to as
\emph{strong connectivity}, defined as follows:

\begin{definition}\cite{benczurkarger96}
A \emph{$k$-strong component} is a maximal $k$-connected vertex-induced subgraph.
The \emph{strong connectivity} of an edge $e$, denoted by $s_e$, is the largest $k$ such that a $k$-strong component contains $e$.
\end{definition}

Note that the set of $k$-strong components form a partition of the vertex set of $G$, and the set of $k+1$-strong components forms a refinement this partition.
We say $e$ is \emph{$k$-strong} if its strong connectivity is $k$ or more, and \emph{$k$-weak} otherwise.
The following simple lemma will be useful in our analysis.

\begin{lemma}\cite{benczurkarger96}\label{lm:k-weak}
The number of $k$-weak edges in a graph on $n$ vertices is bounded by $k(n-1)$.
\end{lemma}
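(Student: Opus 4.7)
The plan is to prove the bound by strong induction on the number of vertices $n$, with base case $n = 1$ being trivial since there are no edges. For the inductive step I split on whether $G$ itself is $k$-connected: if every cut of $G$ has value at least $k$, then $V$ is itself a $k$-strong component, every edge has $s_e \geq k$, and there are zero $k$-weak edges, well below the claimed bound.

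Otherwise $G$ admits a cut $(S, V \setminus S)$ with $c \leq k-1$ crossing edges. The key structural observation I will exploit is that no $k$-strong component of $G$ can straddle this cut: if $C$ were a $k$-edge-connected vertex-induced subgraph meeting both $S$ and $V \setminus S$, then the partition $(C \cap S, C \setminus S)$ would induce a cut of $G[C]$ of size at most $c < k$, contradicting $k$-edge-connectivity of $G[C]$. Consequently each of the $c$ crossing edges lies in no $k$-strong component and is therefore $k$-weak. The same observation further implies that the $k$-strong components of $G$ contained in $S$ are exactly the $k$-strong components of the induced subgraph $G[S]$: the forward containment is immediate because $k$-edge-connectivity depends only on the induced subgraph $G[C]$, while for the reverse direction any larger $k$-edge-connected extension in $G$ of a $k$-strong component of $G[S]$ would be forced to lie entirely inside $S$ by the straddling observation, contradicting maximality in $G[S]$.

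With this structural equivalence in hand, the $k$-weak edges of $G$ decompose into the $c$ crossing edges together with the $k$-weak edges of $G[S]$ and of $G[V \setminus S]$. Applying the inductive hypothesis to each side, which have strictly fewer than $n$ vertices, yields
\[ |\{k\text{-weak edges of }G\}| \leq (k-1) + k(|S|-1) + k(|V \setminus S|-1) = k(n-1) - 1 < k(n-1), \]
completing the induction. The main point to pin down carefully is the coincidence of the $k$-strong components of $G$ lying in $S$ with those of $G[S]$; once this decomposition lemma is in place, the counting is essentially mechanical.
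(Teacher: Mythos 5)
The paper cites this lemma from Bencz\'{u}r--Karger and does not reproduce a proof, so there is no in-paper argument to compare against. Your proof is correct and is essentially the classical cut-removal argument. The structural claims all hold: a $k$-connected induced subgraph cannot straddle a cut of $G$ of size $c<k$ (its crossing edges would give a cut of $G[C]$ of size at most $c$); since the $k$-strong components therefore respect the partition $(S,V\setminus S)$ and they partition $V$, the $k$-strong components of $G$ contained in $S$ coincide with the $k$-strong components of $G[S]$; and consequently an edge with both endpoints in $S$ is $k$-weak in $G$ iff it is $k$-weak in $G[S]$, which justifies applying the inductive hypothesis to each side. The arithmetic $(k-1)+k(|S|-1)+k(n-|S|-1)=k(n-1)-1$ is right. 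One could equally phrase this as an iterative process (repeatedly remove a cut of size $<k$ and recurse on the pieces; at most $n-1$ such cuts, each contributing at most $k-1$ edges), which is I believe how Bencz\'{u}r and Karger present it, but that is the same argument in a different wrapper.
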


The sampling algorithm relies on the following result:

\begin{theorem}\cite{benczurkarger96} \label{thm:bk-sampling}
Let $G'$ be obtained by sampling edges of $G$ with probability $p_e=\min\{\frac{\rho}{\e^2 s_e}, 1\}$, where $\rho=16(d+2)\ln n$, and giving each sampled edge weight $1/p_e$. Then
$G'$ is an $\eps$-sparsification of $G$ with probability at least $1-n^{-d}$.
Moreover, expected number of edges in $G'$ is $O(n \log n)$.
\end{theorem}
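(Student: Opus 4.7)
:}

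The plan is to handle the two claims separately: (i) the expected sample size, and (ii) the cut approximation guarantee. Throughout, let $p_e = \min\{\rho/(\e^2 s_e),1\}$ and let $w_e = 1/p_e$ be the weight assigned to a sampled edge.

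For the expected size, the key structural fact I would establish from Lemma~\ref{lm:k-weak} is the inequality $\sum_{e \in E} 1/s_e \le n-1$. This can be derived by peeling off strong components level by level: remove all edges whose strong connectivity is strictly less than the minimum strength in the current graph, observe via Lemma~\ref{lm:k-weak} that the contribution of the removed edges to $\sum_e 1/s_e$ is at most the number of components created in this step, and iterate. Once this bound is in hand, $\expect[|E'|] = \sum_e p_e \le (\rho/\e^2)\sum_e 1/s_e = O(n\log n/\e^2)$.

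For the cut-approximation part, I would bucket the edges dyadically by strength: for $j=0,1,\ldots,O(\log n)$, let $L_j = \{e : s_e \in [2^j, 2^{j+1})\}$, so every edge in $L_j$ is sampled with essentially the same probability $p_j \asymp \rho/(\e^2 2^j)$. The edges of $L_j$ live inside the maximal $2^j$-strong components of $G$, which form a laminar refinement of the $2^{j-1}$-strong components. Fix one such component $C$ at level $j$ and any cut $(S,\bar S)$ of $G$. The contribution of $C$'s $L_j$ edges to the cut equals the $L_j$-weight of the projected cut $(S\cap C,\bar S\cap C)$ inside $C$, so it suffices to control, for each $C$, the deviation of the sampled $L_j$-weight on every cut of $C$. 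Since $C$ is $2^j$-strong, every cut of $C$ has value at least $2^j$; Chernoff then shows that a cut of value $V\ge 2^j$ deviates by more than $\e V$ with probability at most $2\exp(-\Omega(\e^2 p_j V))=2\exp(-\Omega(\rho V/2^j))$. Combining this with Karger's cut-counting theorem, which asserts that the number of cuts of value at most $\alpha\cdot 2^j$ in $C$ is at most $n^{2\alpha}$, a weighted union bound over all cut sizes yields, for an appropriate $\rho = \Theta((d+2)\log n)$, a failure probability of at most $n^{-(d+2)}$ per component per level.

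Taking a union bound over the $O(n)$ components and $O(\log n)$ levels still leaves failure probability at most $n^{-d}$. On this good event, for every cut $(S,\bar S)$ of $G$ the per-level errors add up: writing the cut value as $V=\sum_j V_j$ where $V_j$ is the contribution of $L_j$, each $V_j$ is preserved up to $\pm \e V_j$, so the total error is at most $\e V$. The main obstacle, and the delicate point in this argument, is the Chernoff/cut-counting balance within a single strong component: one must choose $\rho$ just large enough that the Chernoff tail $\exp(-\Omega(\rho \alpha))$ for cuts of relative size $\alpha$ beats the cut-count $n^{2\alpha}$ \emph{uniformly in} $\alpha$ (not only for small cuts), and one must verify that restricting global cuts of $G$ to each strong component really does split the error additively across levels, using the laminar structure of the strong component decomposition so that a cut's contributions across different levels can be accounted for exactly once.
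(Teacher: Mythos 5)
The paper cites Theorem~\ref{thm:bk-sampling} from \cite{benczurkarger96} without reproducing the proof, so this review compares your proposal against the standard Bencz\'ur--Karger argument. Your overall architecture is the right one and matches theirs in outline: control the expected size via $\sum_e 1/s_e \le n-1$, then prove cut concentration by bucketing edges dyadically by strength, working inside the laminar family of $2^j$-strong components, and combining a Chernoff bound with Karger's cut-counting theorem. The size bound is essentially fine (modulo a small slip: the peeling step should remove a minimum cut of a current component, whose edges all have strength \emph{equal to} that minimum value; there are no edges of strictly smaller strength to remove), and the per-component tail estimate and the union bound over components and levels are correctly set up.

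The gap is precisely at the step you yourself flag as ``the delicate point,'' and your stated resolution of it is not correct. Your Chernoff bound, as you derive it, controls the sampled $L_j$-weight of a cut of a $2^j$-strong component $C$ whose \emph{total} cut value is $V_C$: it gives deviation at most $\e V_C$, \emph{not} $\e V_j$, where $V_j$ is the (possibly far smaller) number of level-$j$ edges in that cut. Indeed no $\pm\e V_j$ bound can hold in general, since a cut may contain just a single level-$j$ edge. Hence the concluding line ``each $V_j$ is preserved up to $\pm\e V_j$, so the total error is at most $\e V$'' does not follow from what was proved. What does follow is a total error of at most $\sum_j \e\, V_{\ge j}$, where $V_{\ge j}$ is the portion of the cut on edges of strength at least $2^j$; each cut edge $e$ contributes to $\Theta(\log s_e)$ of these terms, so the bound is as large as $\Theta(\e V \log n)$. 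As written, then, the proposal shows only that $G'$ is an $O(\e\log n)$-sparsifier, and the naive repair of replacing $\e$ with $\e/\log n$ would inflate the sparsifier to $O(n\log^3 n/\e^2)$ edges rather than $O(n\log n/\e^2)$. Avoiding this $\log n$ accumulation is the crux of the Bencz\'ur--Karger analysis: one must exploit the laminar structure so that each strong component's error is charged only against that component's ``new'' edges (those not in any child component), which requires an argument beyond a single per-level Chernoff bound summed over levels. That step is the piece genuinely missing from the proposal.
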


It follows easily from the proof of theorem \ref{thm:bk-sampling} in \cite{benczurkarger96} that
if we sample using an {\em underestimate} of edge strengths, the resulting graph is
still an $\eps$-sparsification.

\begin{corollary}\label{cor:oversampling}
Let $G'$ be obtained by sampling each edge of $G$ with probability $\tilde p_e\geq p_e$ and and give every sampled edge $e$ weight $1/\tilde p_e$.
Then
$G'$ is an $\eps$-sparsification of $G$ with probability at least $1-n^{-d}$.
\end{corollary}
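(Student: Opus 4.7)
The plan is to argue that the proof of Theorem~\ref{thm:bk-sampling} in \cite{benczurkarger96} uses $p_e$ only in ways that remain valid when $p_e$ is replaced by a larger $\tilde p_e$, with weight $1/\tilde p_e$ on each sampled edge. Specifically, the BK proof invokes $p_e$ via two properties: (i) \emph{unbiased edge estimator}: each sampled edge contributes the correct expected weight, which still holds because $\tilde p_e \cdot (1/\tilde p_e) = 1$; and (ii) a Chernoff-type concentration bound whose per-edge amplitude depends on $1/p_e$ purely as an upper bound. Concretely, for a cut $F$ restricted to edges of strong connectivity exactly $k$, BK applies a Chernoff bound to $\sum_{e \in F} X_e/p_e$ and uses $\max_e 1/p_e \le \e^2 k/\rho$ to control the deviation term. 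Replacing $p_e$ by $\tilde p_e \ge p_e$ only strengthens this bound, since $\max_e 1/\tilde p_e \le \max_e 1/p_e \le \e^2 k/\rho$.

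I would therefore present the proof as a direct reduction. For each edge $e$, let $X_e \sim \mathrm{Ber}(\tilde p_e)$ independently and set $Y_e = X_e/\tilde p_e$. Then $\expect[Y_e]=1$ and $Y_e \in \{0, 1/\tilde p_e\}$ with $1/\tilde p_e \le 1/p_e = \e^2 s_e/\rho$. For any cut $F$, both the mean $\expect\bigl[\sum_{e\in F} Y_e\bigr]$ and the Chernoff upper bound on $\prob\bigl[|\sum_{e\in F}Y_e - |F|| > \e|F|\bigr]$ are at least as good as in the original BK analysis. The remainder of the BK argument, namely the union bound over cuts via Karger's cut-counting, depends only on the combinatorial structure of $G$ and not on the sampling probabilities, so it transfers verbatim. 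Combining these pieces gives the $1 - n^{-d}$ success probability.

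The only real obstacle is bookkeeping: one must check that no step of the BK proof secretly relies on a matching \emph{lower} bound on $p_e$ (for instance, by needing the sampled graph's expected size to be small, or by using $p_e$ in the denominator of a variance lower bound). A careful reading confirms that every occurrence of $p_e$ in the proof is through the inequality $1/p_e \le \e^2 s_e/\rho$, so replacing $p_e$ by any $\tilde p_e \ge p_e$ preserves or improves each estimate. This yields the claimed sparsification guarantee and completes the proof.
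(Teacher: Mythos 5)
Your proposal is correct and matches the paper's approach: the paper itself offers no proof beyond asserting that the corollary ``follows easily from the proof of theorem \ref{thm:bk-sampling} in \cite{benczurkarger96},'' and your argument fleshes out exactly why, namely that every appearance of $p_e$ in the BK analysis is through the unbiasedness of the estimator $X_e/\tilde p_e$ (which is preserved) and through the upper bound $1/p_e \le \e^2 s_e/\rho$ on the per-edge amplitude in the Chernoff step (which can only improve when $\tilde p_e \ge p_e$), while the Karger cut-counting union bound is independent of the sampling probabilities.
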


In \cite{benczurkarger96} the authors give an $O(m\log^2 n)$ time algorithm for calculating estimates of strong connectivities that are sufficient for sampling. The algorithm, however, requires random access to the edges of the graph, which is disallowed in the semi-streaming model. More precisely,
the procedure for estimating edge strengths given in \cite{benczurkarger96} relies on the Nagamochi-Ibaraki algorithm for obtaining sparse certificates for edge-connectivity
in $O(m)$ time (\cite{nagamochi-ibaraki}). The algorithm of \cite{nagamochi-ibaraki} relies on random access to edges of the graph and to the best of our knowledge no streaming implementation is known. In fact we show in Corollary \ref{cor:sparse-cert} that refinement sampling yields a streaming algorithm for producing sparse certificates for edge-connectivity in one pass over the data.

In what follows we will consider unweighted graphs to simplify notation. The results obtained can  be easily extended to the polynomially weighted case as outlined in Remark \ref{rmk:weights} at the end of Section \ref{sec:onepass}.

\section{Refinement Sampling}
\label{sec:refinement}

We start by introducing the idea of refinement sampling that gives a simple algorithm
for efficiently computing a BK-sample, and serves as a building block for our streaming algorithms.

To motivate refinement sampling, let us consider the simpler problem of
identifying all edges of strength at least $k$ in the input graph $G(V,E)$. A natural idea to do so
is as follows: (a) generate a graph $G'$ by sampling edges of $G$ with probability $\tilde{O}(1/k)$,
(b) find connected components of $G'$, and (c) output all edges $(u,v) \in E$ as such that
$u$ and $v$ are in the same connected component in $G'$.
The sampling rate of $\tilde{O}(1/k)$ suggests that if an edge $(u,v)$ has
strong connectivity below $k$, the vertices $u$ and $v$ would end up in different components in $G'$,
and conversely, if the strong connectivity of $(u,v)$ is above $k$, they are likely to stay connected
and hence output in step $(c)$.
While this process indeed filters out most $k$-weak edges,
it is easy to construct examples where the output will contain many edges of strength $1$ even
though $k$ is polynomially large (a star graph, for instance).
The idea of refinement sampling is to get around this by successively {\em refining} the sample
obtained in the final step $(c)$ above.

In designing our algorithm, we will repeatedly invoke the subroutine
\refine$(S, p)$ that essentially implements the simple idea described above.

\begin{description}
\item[Function:] \refine$(S, p)$
\item[Input:] Partition $S$ of $V$, sampling probability $p$.
\item[Output:] Partition $S'$ of $V$, a refinement of $S$.
\item[1.] Take a uniform sample $E'$ of edges of $E$ with probability $p$.
\item[2.] For each $U\in S, U\subseteq V$ let $C(U)$ be the set of connected components of $U$ induced by $E'$.
\item[3.] Return $S':=\cup_{U\in S} C(U)$.
\end{description}

It is easy to see that \refine~can be implemented using $O(n)$ space, a total of $n$ \union~ operations with $O(n\log n)$ overall cost and $m$ \find~ operations with $O(1)$ cost per operation, for an overall running time of
$O(n \log n + m)$(see, e.g. \cite{b:clr}).
Also, \refine ~can be implemented using a single pass over the set of edges.
A scheme of refinement relations between $S_{l,k}$ is given in Fig. \ref{fig:f1}.

The {\bf refinement sampling} algorithm computes partitions $S_{l,j}$ for $l=1,\ldots,L$ and $j=0, 1, \ldots,K$. Here $L=\log (2n)$ is the number of strength levels (the factor of $2$ is chosen for convenience to ensure that $S_{L, K}$ consists of isolated vertices whp), $K$ is a parameter which we call the {\em strengthening} parameter. Also, we choose a parameter $\phi> 0$, which we will refer to as the oversampling parameter. For a partition $S$, let $X(S)$ denote all the edges in $E$ which have endpoints in two different sets in $S$. The partitions are computed as follows:

\begin{description}
\item[Algorithm 1 (Refinement Sampling)]
\item[Initialization:] $S_{l,0} = \{V \}$ for $l=1,\ldots,L$.
\item[1.] Set $k:=1$
\item[2.] For each $l$, $1\leq l\leq L$, set $S_{l,k}:=\refine (S_{l,k-1}, 2^{-l})$.
\item[3.] Set $k:=k+1$. If $k<K$, go to step 1.
\item[4.] For each $e\in E$ define $L(e) = \min\left\lbrace l : e \in X(S_{l,K})\right\rbrace$. Sample edge $e$ with probability $z(e) = \min\{1, \frac{\phi}{\e^2 2^{L(e)}}\}$ and assign it weight $1/z(e)$. Let $R(\phi,K)$ denote the set of edges sampled during this step; we call this the
refinement sample of $G$.
\end{description}

The following two lemmas relate the probabilities $z(e)$ to the sampling probabilities used in the Bencz\'{u}r-Karger sampling scheme.
\begin{lemma}\label{lm:upper}
For any $K>0$, with probability at least $1-K n^{-d}$ every edge $e$ satisfies $z(e) \leq 4\phi \rho/(\e^2 s_e)$.
\end{lemma}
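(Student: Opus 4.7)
The plan is to show that with high probability, for every edge $e$ the level $L(e)$ at which its endpoints first separate is at least $\log_2(s_e/(4\rho))$, which immediately yields $z(e) = \min\{1,\phi/(\e^2 2^{L(e)})\} \leq 4\phi\rho/(\e^2 s_e)$. So the entire argument reduces to establishing a lower bound on $L(e)$ in terms of $s_e$.

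Fix an edge $e$ and let $C$ be the maximal $s_e$-strong component of $G$ containing its endpoints; by definition every cut of $G|_{V(C)}$ has value at least $s_e$, and every edge of $G|_{V(C)}$ has strong connectivity (within $G|_{V(C)}$) at least $s_e$. For any level $l$ with $2^l \leq s_e/(4\rho)$, the sampling probability $p = 2^{-l}$ satisfies $p \geq 4\rho/s_e = \rho/((1/2)^2 s_e)$. Applying Corollary~\ref{cor:oversampling} with $\e=1/2$ to the standalone graph $G|_{V(C)}$, a single round of sampling at rate $p$ restricted to edges in $E(V(C))$ is a $(1/2)$-sparsifier of $G|_{V(C)}$ with probability at least $1-n^{-d}$; in particular all cuts are at least $s_e/2 > 0$, so the sample connects $V(C)$.

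Next I argue by induction on $k$ that if this good event holds in every round $1,\ldots, K$ at level $l$, then $V(C)$ is contained in a single block of $S_{l,k}$ for all $k$. The base case $S_{l,0}=\{V\}$ is trivial. For the inductive step, if $V(C)\subseteq U$ for some $U\in S_{l,k-1}$ and the round-$k$ sample $E'$ contains a connected spanning subgraph on $V(C)$, then within the connected components of $(U, E' \cap E(U))$ all of $V(C)$ ends up in the same piece, since two vertices of $V(C)$ are already joined by a path of sampled edges lying inside $V(C)\subseteq U$. Hence $V(C)$ survives in a single block of $S_{l,K}$, so $e\notin X(S_{l,K})$, giving $L(e) > l$ for every $l$ with $2^l\leq s_e/(4\rho)$, which yields $2^{L(e)} > s_e/(4\rho)$ as desired.

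To finish, I take a union bound over the $K$ rounds, the $L=O(\log n)$ levels, and the $O(n)$ distinct maximal strong components of $G$ (they form a laminar family across strength levels), yielding total failure probability at most $K\cdot L\cdot O(n)\cdot n^{-d}$; the $L$ and $n$ factors are absorbed into $n^{-d}$ by inflating the constant in $\rho$ (equivalently, replacing $d$ by $d+O(1)$ in the application of Corollary~\ref{cor:oversampling}), leaving the stated bound $Kn^{-d}$. The main technical care is precisely in this last step: doing the union bound over strong components rather than all edges (so the polynomial factors stay at $n$ rather than $n^2$), and relying on the slack in the choice of $\rho$ to absorb the residual $n\log n$ factor without changing the form of the bound.
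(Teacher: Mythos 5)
Your proof follows essentially the same route as the paper's: fix the $s_e$-strong component $C$ containing $e$, invoke the Bencz\'{u}r--Karger sampling theorem (Theorem~\ref{thm:bk-sampling}, equivalently Corollary~\ref{cor:oversampling} with $\e=1/2$) to show that at every level $l$ with $2^{-l}\geq 4\rho/s_e$ the sampled subgraph keeps $V(C)$ connected in each of the $K$ rounds of \refine, hence $e\notin X(S_{l,K})$ and $2^{L(e)}>s_e/(4\rho)$, which gives $z(e)\leq 4\phi\rho/(\e^2 s_e)$. The only difference is that you spell out the induction on $k$ and the union bound over rounds, levels, and strong components (absorbing the extra polynomial factor into the constant in $\rho$), which the paper's one-line argument leaves implicit.
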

\begin{proof}
Consider an edge $e$ with strong connectivity $s_e$, and let $C$ denote the $s_e$-strongly connected component containing $e$. By Theorem \ref{thm:bk-sampling}, sampling with probability $\min\{4\rho/s_e, 1\}$ preserves all cuts up to $1\pm \frac1{2}$ in $C$ with probability at least $1-n^{-d}$. Hence, all $s_e$-strongly connected components stay connected after $K$ passes of \refine~for all $l>0$ such that $2^{-l}\geq 4\rho/s_e$, yielding the lemma.
\end{proof}

\begin{lemma}\label{lm:lower}
If $K > \log_{4/3} n$, then $2^{-L(e)+1}\geq 1/(2s_e)$ for every $e\in E(G)$ with probability at least $1-Ke^{-(n-1)/100}$.
\end{lemma}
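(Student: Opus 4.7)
The plan is to show that at level $l^*:=\lceil \log_2(2s_e)\rceil$ the edge $e=(u,v)$ becomes a cut edge in the partition $S_{l^*,K}$ after $K>\log_{4/3}n$ rounds, whence $L(e)\le l^*\le \log_2(4s_e)$ and therefore $2^{-L(e)+1}\geq 2^{-l^*+1}\geq 1/(2s_e)$ as required. With this choice of $l^*$, the sampling probability $p_{l^*}=2^{-l^*}\leq 1/(2s_e)$ lies in the ``under-sampling'' regime: every cut of size $s_e$ in the $s_e$-strong component $C$ containing $e$ has expected sampled weight at most $1/2$, so one expects such cuts to be broken by sampling.

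The crux of the proof is a per-round shrinkage claim for level $l^*$: writing $U_k$ for the part of $S_{l^*,k}$ containing $u$, one would show that the event $\mathcal{E}_k=\{\text{for every }U\in S_{l^*,k}\text{ with }|U|\geq 2,\text{ the largest child of }U\text{ in }S_{l^*,k+1}\text{ has size }\leq (3/4)|U|\}$ occurs with probability at least $1-e^{-(n-1)/100}$. The intuition is that within each part $U$ with $|U|\geq 2$, a Chernoff tail bound on the number of sampled edges of $G[U]$ lets one conclude that the sampled subgraph has at least $|U|/4+1$ connected components, so its largest component (which contains the next iteration's $U_{k+1}$ whenever $U=U_k$) has size at most $(3/4)|U|$. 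Summing the Chernoff deviation bounds across all parts of $S_{l^*,k}$ accumulates an exponent proportional to $\sum_U |U|=n$, yielding the overall $e^{-(n-1)/100}$ bound.

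Conditional on the intersection $\bigcap_{k<K}\mathcal{E}_k$, iterating the shrinkage bound gives $|U_K|\leq (3/4)^K n < 1$, so $U_K=\{u\}$, which forces $v\notin U_K$ and hence $e\in X(S_{l^*,K})$ and $L(e)\leq l^*$. A union bound over the $K$ rounds then accumulates the failure probability to at most $Ke^{-(n-1)/100}$; note that this bound is already uniform in $e$ because the events $\mathcal{E}_k$ are defined for the entire partition and are not tied to a particular edge, so once they all hold, every edge's endpoints separate at the appropriate level.

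The main obstacle I foresee is that a naive count of sampled edges inside a part $U$ involves up to $\binom{|U|}{2}$ Bernoulli trials with per-trial probability $p_{l^*}$, which need not yield the concentration required by the shrinkage claim when the edge density of $G[U]$ is much higher than $O(s_e)$ per vertex. Resolving this cleanly will require restricting the Chernoff analysis to a Nagamochi--Ibaraki style sparse $s_e$-certificate of $G[C]$ on $O(s_e|V(C)|)$ edges and arguing that edges outside this certificate are $s_e$-weak and therefore do not affect the component structure of the sampled graph in a way that matters for the shrinkage bound, so that the effective number of Bernoulli trials entering the exponent is $O(n)$ rather than $O(n^2)$.
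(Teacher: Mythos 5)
Your per-round shrinkage claim is false, and the obstacle you flag is real but is not fixed by the remedy you propose. The event $\mathcal{E}_k$ asserts that every part $U$ of $S_{l^*,k}$ with $|U|\ge 2$ has its largest child of size at most $(3/4)|U|$. But take any part $U$ that contains an $(s_e+1)$-strong component $C'$ whose strength is much larger than $s_e$ (say $s_{C'}=\Omega(s_e\log n)$, e.g.\ a dense cluster to which $u$ belongs while the edge $e$ to $v$ is a thin bridge). Sampling the edges of $C'$ at rate $p_{l^*}\le 1/(2s_e)$ keeps $C'$ connected with overwhelming probability, by exactly the Karger-style calculation used in Lemma~\ref{lm:upper}; so the largest child of $U$ has size at least $|C'|$, which can be all of $|U|$. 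Hence the recursion $|U_K|\le(3/4)^K n$ never starts, and the target conclusion $U_K=\{u\}$ is simply wrong whenever $u$'s $(s_e+1)$-strong component is non-trivial. The Nagamochi--Ibaraki certificate idea does not rescue this: the algorithm samples from $G$, not from a certificate, so a dense $(s_e+1)$-strong piece of $U$ stays connected under the actual sample regardless of how few certificate edges were drawn; bounding the number of sampled certificate edges therefore says nothing about how $U$ splits.

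The paper avoids this by changing the unit of progress from vertices to supernodes. It contracts all $(s_e+1)$-strong components of $G$ into single supernodes, obtaining a graph $H$ all of whose edges are $(s_e+1)$-weak, so $|E(H)|\le s_e(n-1)$ by Lemma~\ref{lm:k-weak}. A single Chernoff bound applied to this global weak-edge count then shows that, at any level $l$ with $2^{-l}<1/(2s_e)$, with probability at least $1-e^{-(n-1)/100}$ at most $3(n-1)/4$ weak edges are sampled in a round, which forces a constant fraction of the supernodes to become isolated each round; after $K>\log_{4/3}n$ rounds no weak edge survives inside any part. Since $e$ has strength exactly $s_e$, its endpoints $u$ and $v$ lie in distinct supernodes, and separating those two supernodes is all that is needed to put $e$ into $X(S_{l,K})$; this is far weaker than isolating the vertex $u$. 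The contraction step — arguing about supernodes, not vertices, so that the relevant edge set has size $O(s_e n)$ for structural reasons rather than by certificate pruning — is the ingredient your argument is missing.
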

\begin{proof}
Consider a level $l$ such that $p=2^{-l} < 1/(2s_e)$. Let $H$ be the graph
obtained by contracting all $(s_e+1)$-strong components in $G$ into supernodes.
Since $H$ contains only  $(s_e+1)$-weak edges, the number of edges is at most $s_e(n-1)$ by Lemma \ref{lm:k-weak}. As the expected number of $(s_e+1)$-weak edges in the sample is at most $(n-1)/2$, by
Chernoff bounds, the probability that the
number of $(s_e+1)$-weak edges in the sample exceeds $3(n-1)/4$ is at most $(e^{1/4}(5/4)^{-5/4})^{-(n-1)/2}<e^{-(n-1)/100}$. Thus
at least one quarter of the supernodes get isolated in each iteration. Hence, no $(s_e+1)$-weak edge survives after $K=\log_{4/3} n$ rounds of refinement sampling with probability at least $1-K e^{-(n-1)/100}$. Since $L(e)$ was defined as the least $l$ such that $e\in X(S_{l, K})$, the endpoints of $e$ were connected in $S_{L(e)-1, K}$, so $2^{-L(e)+1}\geq 1/(2s_e)$.
\end{proof}

\begin{theorem}
Let $G'$ be the graph obtained by running Algorithm 1 with $\phi:=4\rho$. Then $G'$ has
$O(n\log^2 n/\e^2)$ edges in expectation, and is an
$\eps$-sparsification of $G$ with probability at least $1-n^{-d+1}$.
\end{theorem}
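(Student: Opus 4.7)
The plan is to prove the two claims separately: the high-probability $\eps$-sparsification guarantee and the expected size bound of $O(n\log^2 n/\e^2)$.

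For the sparsification claim I would first show that, with high probability, $z(e)$ is at least the Bencz\'ur-Karger probability $p_e = \min\{1, \rho/(\e^2 s_e)\}$ for every edge $e$; once this holds, Corollary~\ref{cor:oversampling} applies directly. Choosing the strengthening parameter $K > \log_{4/3} n$, Lemma~\ref{lm:lower} implies that with probability at least $1 - Ke^{-(n-1)/100}$ every edge satisfies $2^{-L(e)+1} \geq 1/(2s_e)$, i.e., $2^{-L(e)} \geq 1/(4s_e)$. Plugging $\phi=4\rho$ into the definition of $z(e)$ then yields
\[
z(e) \;=\; \min\!\left\{1,\, \frac{\phi}{\e^2\, 2^{L(e)}}\right\} \;\geq\; \min\!\left\{1,\, \frac{\rho}{\e^2 s_e}\right\} \;=\; p_e.
\]
Conditional on this event, step~4 of Algorithm~1 is a BK-style oversampling with independent coin flips, so Corollary~\ref{cor:oversampling} produces an $\eps$-sparsifier with conditional failure probability at most $n^{-d}$.

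For the expected size bound, I would use Lemma~\ref{lm:upper}: with probability at least $1-Kn^{-d}$, every edge satisfies $z(e) \leq 4\phi\rho/(\e^2 s_e) = 16\rho^2/(\e^2 s_e)$. Combining this bound with the classical Bencz\'ur-Karger identity $\sum_{e\in E} 1/s_e \leq n-1$ (implicit in their Nagamochi-Ibaraki based analysis) and using $z(e)\leq 1$ off the good event, I get
\[
\expect[|E(G')|] \;=\; \sum_e \expect[z(e)] \;\leq\; \frac{16\rho^2}{\e^2}\sum_e \frac{1}{s_e} \;+\; m\cdot Kn^{-d} \;=\; O\!\left(\frac{\rho^2 n}{\e^2}\right) + o(1),
\]
which is $O(n\log^2 n/\e^2)$ since $\rho=O(\log n)$ and, for $K=\Theta(\log n)$ and $d$ chosen large enough (using $m\leq n^2$), the additive $mKn^{-d}$ term is absorbed.

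The theorem follows by a union bound over the three failure events: Lemma~\ref{lm:lower} with probability $Ke^{-(n-1)/100}$, Lemma~\ref{lm:upper} with probability $Kn^{-d}$, and Corollary~\ref{cor:oversampling} conditionally with probability $n^{-d}$; for $K=\Theta(\log n)$ the total is at most $n^{-d+1}$. The main subtlety, and the only part that requires care, is that the $z(e)$ are themselves random variables depending on the refinement draws, so one must condition on the event $\{z(e)\geq p_e \text{ for all } e\}$ before invoking Corollary~\ref{cor:oversampling}. This is harmless because the coin flips in step~4 are independent of the refinement process given the $L(e)$'s, so the corollary applies verbatim to the conditional distribution and the three failure probabilities simply add.
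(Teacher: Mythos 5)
Your proof is correct and follows the paper's argument exactly: Lemma~\ref{lm:lower} with $\phi=4\rho$ shows $z(e)\geq p_e$ w.h.p.\ so Corollary~\ref{cor:oversampling} yields the sparsification guarantee, Lemma~\ref{lm:upper} together with $\rho=O(\log n)$ and $\sum_e 1/s_e=O(n)$ gives the expected size bound, and a union bound over the three failure events gives $1-n^{-d+1}$. You are slightly more explicit than the paper --- you spell out the requirement $K>\log_{4/3}n$, the Bencz\'ur--Karger fact $\sum_e 1/s_e\leq n-1$, and the need to condition on the ``good'' event before invoking Corollary~\ref{cor:oversampling} since the $z(e)$ are random --- but the route is the same.
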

\begin{proof}
We have  from lemma \ref{lm:lower} and the choice of $\phi$ that the sampling probabilities dominate those used in Bencz\'{u}r-Karger sampling with probability at least $1-K e^{-(n-1)/100}$. Hence, by corollary \ref{cor:oversampling}
we have that every cut in $G'$ is within $1\pm \eps$ of its value in $G$ with probability at least $1-K e^{-(n-1)/100}-n^{-d}$. The expected size of the sample is $O(n\log^2 n/\e^2)$ by lemma \ref{lm:upper} together with the fact that $\rho=O(\log n)$. The probability of failure of the estimate in lemma \ref{lm:lower} is at most $Kn^{-d}$, so all bounds hold with probability at least $1-Kn^{-d}+K e^{-(n-1)/100}-n^{-d}>1-n^{-d+1}$ for sufficiently large $n$.  The high probability bound on the number of edges follows by an application of the Chernoff bound.
\end{proof}

\begin{figure}
\begin{picture}(20,10)(0,0)
\put(1, 8){\framebox(1.7, 1){$S_{1,1}$}}
\put(4, 8){\framebox(1.7, 1){$S_{1,2}$}}
\put(7, 8){\makebox(1.7, 1){$\ldots$}}
\put(10, 8){\framebox(1.7, 1){$S_{1,K-1}$}}
\put(13, 8){\framebox(1.7, 1){$S_{1,K}$}}

\put(2.7,8.5){\vector(1, 0){1.3}}
\put(5.7,8.5){\vector(1, 0){1.3}}
\put(8.7,8.5){\vector(1, 0){1.3}}
\put(11.7,8.5){\vector(1, 0){1.3}}

\put(1, 6){\framebox(1.7, 1){$S_{2,1}$}}
\put(4, 6){\framebox(1.7, 1){$S_{2,2}$}}
\put(7, 6){\makebox(1.7, 1){$\ldots$}}
\put(10, 6){\framebox(1.7, 1){$S_{2,K-1}$}}
\put(13, 6){\framebox(1.7, 1){$S_{2,K}$}}

\put(2.7,6.5){\vector(1, 0){1.3}}
\put(5.7,6.5){\vector(1, 0){1.3}}
\put(8.7,6.5){\vector(1, 0){1.3}}
\put(11.7,6.5){\vector(1, 0){1.3}}

\put(1, 5){\makebox(1.7, 1){$\vdots$}}
\put(4, 5){\makebox(1.7, 1){$\vdots$}}
\put(7, 5){\makebox(1.7, 1){$\vdots$}}
\put(10, 5){\makebox(1.7, 1){$\vdots$}}
\put(13, 5){\makebox(1.7, 1){$\vdots$}}

\put(1, 4){\framebox(1.7, 1){$S_{L-1,1}$}}
\put(4, 4){\framebox(1.7, 1){$S_{L-1,2}$}}
\put(7, 4){\makebox(1.7, 1){$\ldots$}}
\put(10, 4){\framebox(1.7, 1){$S_{L-1,K-1}$}}
\put(13, 4){\framebox(1.7, 1){$S_{L-1,K}$}}

\put(2.7,4.5){\vector(1, 0){1.3}}
\put(5.7,4.5){\vector(1, 0){1.3}}
\put(8.7,4.5){\vector(1, 0){1.3}}
\put(11.7,4.5){\vector(1, 0){1.3}}

\put(1, 2){\framebox(1.7, 1){$S_{L,1}$}}
\put(4, 2){\framebox(1.7, 1){$S_{L,2}$}}
\put(7, 2){\makebox(1.7, 1){$\ldots$}}
\put(10, 2){\framebox(1.7, 1){$S_{L,K-1}$}}
\put(13, 2){\framebox(1.7, 1){$S_{L,K}$}}

\put(2.7,2.5){\vector(1, 0){1.3}}
\put(5.7,2.5){\vector(1, 0){1.3}}
\put(8.7,2.5){\vector(1, 0){1.3}}
\put(11.7,2.5){\vector(1, 0){1.3}}
\end{picture}
\caption{Scheme of refinement relations between partitions for Algorithm 1.}
\label{fig:f1}
\end{figure}
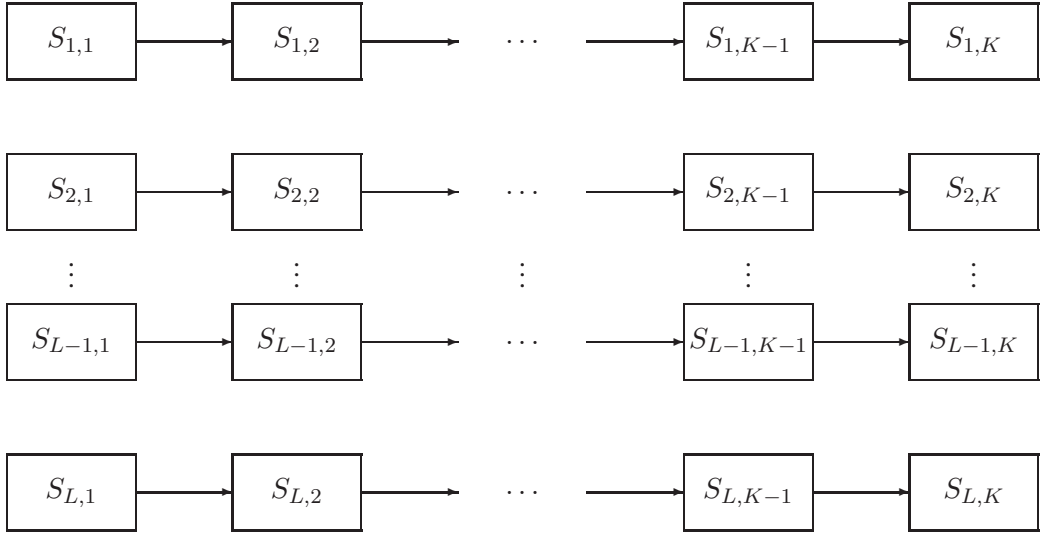

\smallskip
The next lemma follows from the discussion above:

\begin{lemma}
For any $\eps > 0$, an $\eps$-sparsification of $G$ with $O(n \log^2 n/\eps^2)$ edges
can be constructed in $O(\log n)$ passes of \refine\ using
$O(\log n)$ space per node and $O(\log^2 n)$ time per edge.
\end{lemma}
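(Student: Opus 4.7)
The plan is to combine the sparsification guarantee already established in the preceding theorem (invoking Algorithm 1 with $\phi = 4\rho$ and $K = \log_{4/3}(2n) = O(\log n)$) with a careful streaming implementation of that algorithm. The quality and size of the output are handed to us, so the only remaining obligations are the pass count, the space per node, and the per-edge time.

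The key implementation observation I would start from is that within a single outer iteration $k$ of Algorithm 1, the $L = \log(2n) = O(\log n)$ calls $S_{l,k} := \refine(S_{l,k-1}, 2^{-l})$ operate independently on the same edge stream: each call uses fresh independent coins and updates its own level-$l$ union-find data structure representing the partition $S_{l,k-1}$. Therefore all $L$ refinements can be interleaved into a single pass — when an edge $e$ arrives we toss $L$ independent coins with probabilities $2^{-1}, 2^{-2}, \ldots, 2^{-L}$ and, for each level $l$ at which $e$ is sampled, perform a \find/\union~ in the corresponding level-$l$ structure. This yields $K$ passes to produce $\{S_{l,K}\}_{l=1}^L$. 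One additional pass then realises Step 4 of the algorithm: for each edge $e$ we locate $L(e) = \min\{l : e \in X(S_{l,K})\}$ by checking the $L$ structures, and output $e$ with probability $z(e)$ and weight $1/z(e)$. The total pass count is $K+1 = O(\log n)$.

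For the remaining resource bounds I would argue as follows. We maintain $L$ disjoint-set data structures over $V$, each using $O(n)$ space, for $O(nL) = O(n \log n)$ total — that is, $O(\log n)$ per node — plus $O(1)$ scratch for the current edge. Per pass and per edge, we spend $O(L) = O(\log n)$ time tossing coins, and an expected $\sum_{l=1}^L 2^{-l} = O(1)$ union operations at the standard $O(\log n)$ amortised cost, for an expected $O(\log n)$ per edge per pass. Over the $O(\log n)$ passes this sums to $O(\log^2 n)$ expected time per edge, matching the claim.

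The step I expect to require the most care is the final pass that computes $L(e)$: a naive implementation performs one \find~ at each of the $L$ levels, contributing $O(\log^2 n)$ per edge, which happens to exactly saturate our budget. The tighter $O(\log n \log\log n)$ per-edge bound advertised in the introduction would instead require coupling the samples across levels (e.g., drawing a single uniform variate $u_e$ per edge and declaring $e$ to be in the level-$l$ sample iff $u_e < 2^{-l}$, so that $S_{l+1,K}$ refines $S_{l,K}$ deterministically) in order to binary-search for $L(e)$; since the present lemma only asks for $O(\log^2 n)$, this optimisation can be deferred, and the proof reduces to the bookkeeping above.
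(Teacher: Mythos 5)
Your proposal is essentially the argument the paper has in mind: the paper treats this lemma as immediate from the preceding theorem (size and sparsification quality) together with the streaming implementation of \refine, and the remark just after the lemma confirms your accounting of where the $O(\log^2 n)$ per-edge cost comes from — $K = O(\log n)$ passes, each flipping $L = O(\log n)$ coins per edge. Your identification of the independent-coins-per-level versus nested-coins distinction as the source of the $\log n$ vs.\ $\log\log n$ gap also matches the paper's Algorithm~2.

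Two minor points worth tightening, neither fatal. First, when interleaving the $L$ calls to $\refine$ in pass $k$, the union-find structure being built for $S_{l,k}$ starts from singletons, but each sampled edge must first be checked against the \emph{final} partition $S_{l,k-1}$ from the previous pass before being unioned; so you actually carry two $O(n\log n)$-sized banks of union-find structures (previous pass and current pass) per pass, which still fits $O(\log n)$ space per node but should be stated. Second, your claim that the final pass ``exactly saturates the budget'' at $O(\log^2 n)$ per edge is not quite the dominant term: with the standard accounting the paper uses ($O(1)$ amortized per find, $O(\log n)$ amortized per union), the final pass costs only $O(L) = O(\log n)$ finds per edge, and the $O(\log^2 n)$ total comes from the $K$ refinement passes. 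The conclusion is the same, so this is just a bookkeeping slip rather than a gap.
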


We now note that one $\log n$ factor in the running time comes from the fact that during each pass $k$ Algorithm 1 flips a coin at every level $l$ to decide whether or not to include $e$ into $S_{l, k}$ when $e\in S_{l, k-1}$. If we could guarantee that $S_{l, k}$ is a refinement of $S_{l', k}$ for all $l'<l$ and for all $k$, we would be able to use binary search to find the largest $l$ such that $e\in S_{l, k}$ in $O(\log \log n)$ time.
Algorithm 2 given below uses iterative sampling to ensure a scheme of refinement relations given in Fig. \ref{fig:f2}. For each edge $e$, $1 \le k \le K$, and $1 \le \ell \le L$, we
define for convenience independent Bernoulli random variables $A_{l,k,e}$ such
$\prob[A_{l,k,e}=1]=1/2$, even though the algorithm will not always need to flip all these $O(\log^2 n)$ coins. Also define $U_{l,k,e}=\prod_{j\leq l} A_{j,k,e}$.
The algorithm uses connectivity data structures $D_{l,k}$, $1\leq l\leq L, 1\leq k\leq K$.
Adding an edge $e$ to $D_{l,k}$ merges the components that the endpoints of $e$ belong to in $D_{l,k}$.

\begin{description}
\item[Algorithm 2 (An $O(\log n)$-Pass Sparsifier)]
\item[Input:] Edges of $G$ streamed in adversarial order: $(e_1,\ldots, e_m)$.
\item[Output:] A sparsification $G'$ of $G$.
\item[Initialization:]  Set $E':=\emptyset$.
\item[1.] For all $k=1,\ldots, K$
\item[2.] Set $t=1$.
\item[3.] For all $l=1,\ldots, L$
\item[4.] Add $e_t=(u_t, v_t)$ to $D_{l, k}$ if $U_{l,k,e}=1$ and $u_t$ and $v_t$ are connected in $D_{(l, k-1)}$.
\item[5.] Set $t:=t+1$. Go to step 1 if $t\leq m$.
\item[6.] For each $e_t$ define $L'(e_t)$ as the minimum $l$ such that $u_t$ and $v_t$ are not connected in $D_{l, K}$. Set $z'(e_t):=\min\left\{1, \frac{4\rho}{\e^2 2^{L'(e_t)}}\right\}$. Output $e_t$ with probability $z'(e_t)$, giving it weight $1/z'(e_t)$.
\end{description}

\begin{theorem}
For any $\eps > 0$, there exists an $O(\log n)$-pass streaming algorithm that produces an $\eps$-sparsification $G'$ of a graph $G$ with at most $O(n\log^2 n/\e^2)$ edges using
$O((n/m) \log n + \log n \log \log n)$ time per edge.
\end{theorem}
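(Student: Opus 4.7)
The plan is to view Algorithm 2 as a carefully correlated coupling of the random choices underlying Algorithm 1. Under this coupling, (i) each individual partition $D_{l,k}$ has exactly the marginal distribution of the corresponding $S_{l,k}$ in Algorithm 1, so correctness and sample size follow from the analysis of Section \ref{sec:refinement}, and (ii) the product structure $U_{l,k,e}=\prod_{j\le l}A_{j,k,e}$ forces the partitions $D_{1,k},\ldots,D_{L,k}$ within a single pass to form a refinement chain, which is precisely what enables binary search.

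For correctness and sparsifier size, I would first observe that for fixed $l$ and $k$, $U_{l,k,e}$ is marginally Bernoulli$(2^{-l})$ and is independent of everything computed in earlier passes. Conditioned on $D_{l,k-1}=S_{l,k-1}$, the set of edges added to $D_{l,k}$ is then exactly a probability-$2^{-l}$ sample of the intra-part edges of $D_{l,k-1}$, so $D_{l,k}$ has the same law as $\refine(S_{l,k-1},2^{-l})$; by induction $D_{l,K}\stackrel{d}{=}S_{l,K}$ for every $l$. Lemmas \ref{lm:upper} and \ref{lm:lower} each inspect only one level $l$ at a time (using the independence of the $K$ passes) and thus transfer verbatim; combined with Corollary \ref{cor:oversampling} and a Chernoff bound on the sample size, this yields that $G'$ is an $\eps$-sparsifier with $O(n\log^2 n/\eps^2)$ edges with probability at least $1-n^{-d+1}$.

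The key new structural fact is the within-pass refinement chain: for every $k$, $D_{l,k}$ refines $D_{l-1,k}$. I would prove this by induction on $k$, the base case $D_{\cdot,0}=\{V\}$ being trivial. For the inductive step, the product definition gives $U_{l,k,e}=1\Rightarrow U_{l-1,k,e}=1$, and the pass-$(k{-}1)$ refinement chain says that ``$u,v$ connected in $D_{l,k-1}$'' implies ``$u,v$ connected in $D_{l-1,k-1}$''. Hence every \union\ applied to $D_{l,k}$ is also applied to $D_{l-1,k}$, so every part of $D_{l,k}$ lies inside a single part of $D_{l-1,k}$.

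Given this chain, the per-edge cost is bounded as follows. For an edge $e_t$ in pass $k$ I would: (a) sample $L_e:=\min\{l:A_{l+1,k,e}=0\}$ by flipping the coins $A_{l,k,e}$ lazily, in $O(1)$ expected time (geometric with parameter $1/2$); (b) use the chain on $D_{\cdot,k-1}$ to binary-search for $L^*_e:=\max\{l:u_t,v_t\text{ connected in }D_{l,k-1}\}$ with $O(\log L)=O(\log\log n)$ \find\ queries; (c) call \union\ in $D_{l,k}$ for $l=1,\ldots,\min(L_e,L^*_e)$, which in expectation is $O(1)$ operations. Summed over the $K=O(\log n)$ refinement passes plus the one final sampling pass (which again uses binary search over $D_{\cdot,K}$ at $O(\log\log n)$ per edge) this gives $O(\log n\log\log n)$ work per edge, and the amortized initialization of the $O(\log n)$ union--find structures per pass contributes the extra $O((n/m)\log n)$ term. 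The main obstacle I expect is cleanly propagating the refinement invariant across passes, since the connectivity test used when building pass $k$ reads from the pass-$(k{-}1)$ structures whose chain must already be in place, and simultaneously verifying that the marginal-equivalence argument is not disturbed by the deliberate cross-level correlations within each pass.
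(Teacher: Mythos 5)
Your proposal is correct and follows the same strategy as the paper's proof: exploit the fact that the $D_{l,k}$ form a two-dimensional grid of refinements so that a binary search over levels costs $O(\log\log n)$ per pass, and bound the per-pass union count by the geometric tail of the $A_{l,k,e}$ coin flips, while transferring correctness and sample size from the Algorithm~1 analysis via a coupling. You spell out two steps the paper leaves to the reader --- the marginal-equivalence argument showing Lemmas~\ref{lm:upper} and~\ref{lm:lower} apply because they only need per-level marginals plus a union bound, and the inductive proof of the within-pass refinement chain ($U_{l,k,e}=1\Rightarrow U_{l-1,k,e}=1$ combined with the pass-$(k{-}1)$ chain) --- which is a useful clarification rather than a deviation.
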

\begin{proof}
The correctness of Algorithm 2 follows in the same way as for Algorithm 1 above, so it remains to determine its runtime. An $O((n/m)\log n+1)$ term per edge comes from amortized $O(n\log n+m)$ complexity of UNION-FIND operations. The $\log n$ factor in the runtime comes from the $\log n$ passes, and we now show that step 3 can be implemented in $O(\log \log n)$ time. First note that since  $S_{l',k'}$ is a refinement of $S_{l,k}$ whenever $l'\geq l$ and $k'\geq k$,  one can use binary search to determine the largest $l_0$ such that $u_t$ and $v_t$ are connected in $D_{l_0-1,k-1}$. One then keeps flipping a fair coin
and adding $e$ to connectivity data structures $D_{l, k}$ for successive $l\geq l_0$ as long as the coin keeps coming up heads. Since $2$ such steps are performed on average, it takes $O(K)=O(\log n)$ amortized time per edge by the Chernoff bound. Putting these estimates together, we obtain the claimed time complexity.
\end{proof}

The scheme of refinement relations between $S_{l,k}$ is depicted in Fig. \ref{fig:f2}.

\begin{corollary}
For any $\eps > 0$, there is an $O(\log n)$-pass algorithm
that produces an $\eps$-sparsification $G'$ of an input graph $G$ with at
most $O(n\log n/\e^2)$ edges using $O(\log^2 n)$ space per node,
and performing $O(\log n\log\log n+(n/m)\log^4 n)$ amortized work per edge.
\end{corollary}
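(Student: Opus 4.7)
The plan is to use the previous theorem's algorithm as a preprocessing step and then apply the standard Benczúr-Karger algorithm on its output to reduce the sparsifier size, while keeping the whole pipeline within the semi-streaming budget.

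First, I would run Algorithm 2 with error parameter $\eps/2$ over the stream to produce an intermediate weighted graph $G''$ that is an $(\eps/2)$-sparsification of $G$ with $O(n\log^2 n/\e^2)$ edges. By the previous theorem this costs $O(\log n)$ passes, $O(\log^2 n)$ space per node (for the $L \cdot K = O(\log^2 n)$ connectivity data structures used in Algorithm 2), and $O((n/m)\log n+\log n\log\log n)$ amortized work per edge. Crucially, $G''$ has $\tilde O(n)$ edges and so can be stored in memory consistent with the semi-streaming restriction.

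Once the stream is exhausted and $G''$ is in memory, I would invoke the Benczúr-Karger sparsification procedure of \cite{benczurkarger96} on $G''$ with error parameter $\eps/2$, which in $O(|E(G'')|\log^2 n) = O(n\log^4 n/\e^2)$ time produces an $(\eps/2)$-sparsification $G'$ of $G''$ with $O(n\log n/\e^2)$ edges. Composing the two approximations shows that every cut of $G'$ is within $(1\pm\eps/2)^2 \subseteq (1\pm\eps)$ of the corresponding cut of $G$, and a union bound gives the desired high-probability guarantee. The sample-size and space bounds follow immediately.

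For the amortized per-edge runtime, I charge the $O(n\log^4 n/\e^2)$ post-processing cost across the $m$ stream edges to get an extra $O((n/m)\log^4 n)$ term; combined with the per-edge cost of Algorithm 2, the total is $O(\log n\log\log n+(n/m)\log n+(n/m)\log^4 n) = O(\log n\log\log n+(n/m)\log^4 n)$, and all of this is done within the original $O(\log n)$ passes (the BK step uses no additional pass since it operates on $G''$ in memory). I do not anticipate a genuine obstacle here: the only subtlety is checking that composing two sparsifications with error $\eps/2$ each yields an $\eps$-sparsification, which is routine, and ensuring that the intermediate graph fits in semi-streaming space, which is immediate from its $\tilde O(n)$ edge count.
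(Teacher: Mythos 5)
Your proposal is essentially identical to the paper's proof: run Algorithm 2 to get an $O(n\log^2 n/\e^2)$-edge sparsifier held in memory, then apply Bencz\'ur--Karger to it, charging the $O(n\log^4 n/\e^2)$ post-processing cost over the $m$ stream edges. You spell out the $\eps/2$ composition step that the paper leaves implicit, but otherwise the argument, the space bound, and the amortized-time accounting match.
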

\begin{proof}
One can obtain a sparsification $G'$ with $O(n \log^2 n/\e^2)$ edges by running Algorithm 2 on the input graph $G$, and then run the Bencz\'{u}r-Karger algorithm
on $G'$ without violating the restrictions of the semi-streaming model. Note that even though $G'$ is a weighted graph, this will have overhead $O(\log^2 n)$ per edge of $G'$ since the weights are polynomial. Since $G'$ has $O(n\log^2 n)$ edges, the amortized work per edge of $G$ is $O(\log n\log\log n+(n/m)\log^4 n)$.
The Bencz\'{u}r-Karger algorithm can be implemented using space proportional to the size of the graph, which yields $O(\log^2 n)$ space per node.
\end{proof}
\begin{remark}
The algorithm improves upon the runtime of  the Bencz\'{u}r-Karger sparsification scheme when $m=\omega(n\log^2 n)$.
\end{remark}

\section{A One-pass $\tilde{O}(n+m)$-Time Algorithm for Graph Sparsification} \label{sec:onepass}

In this section we convert Algorithm 2 obtained in the previous section to a one-pass algorithm.
We will design a one-pass algorithm that produces an
$\eps$-sparsifier with $O(n\log^3 n/\eps^2)$ edges using only
$O(\log \log n)$ amortized work per edge. A simple post-processing step at the
end of the algorithm will allow us to reduce the size to $O(n\log n/\eps^2)$ edges
with a slightly increased space and time complexity. The main difficulty is that in going
from $O(\log n)$ passes to a one-pass algorithm,
we need to introduce and analyze new dependencies in the sampling process.

As before, the algorithm maintains connectivity data structures $D_{l, k}$, where $1\leq l\leq L$ and $1\leq k\leq K$. In addition to indexing $D_{l,k}$ by pairs $(l,k)$ we shall also write $D_{J}$ for $D_{l,k}$, where $J=K(l-1)+k$, so that $1\leq J \leq LK$.
This induces a natural ordering on $D_{l,k}$, illustrated in Fig. \ref{fig:f3},
that corresponds to the structure of refinement relations. We will assume for simplicity of presentation that $D_{0}=D_{1,0}$ is a connectivity data structure in which all vertices are connected.
For each edge $e$, $1 \le \ell \le L$, and $1 \le k \le K$, we define
an independent Bernoulli random variable $A'_{l, k, e}$ with $\prob[A'_{l, k, e}=1]=2^{-l}$.
The algorithm is as follows:
\begin{description}
\item[Algorithm 3 (A One-Pass Sparsifier)]
\item[Input:] Edges of $G$ streamed in adversarial order: $(e_1,\ldots, e_m)$.
\item[Output:] A sparsification $G'$ of $G$.
\item[Initialization:]  Set $E':=\emptyset$.
\item[1.] Set $t=1$.
\item[2.] For all $J=1,\ldots, LK$ ($J=(l,k)$)
\item[3.] Add $e_t=(u_t, v_t)$ to $D_{J}$ if $A'_{l,k,e}=1$ and $u_t$ and $v_t$ are connected in $D_{J-1}$.
\item[4.] Define $L'(e_t)$ as the minimum $l$ such that $u_t$ and $v_t$ are not connected in $D_{l, K}$. Set $z'(e_t):=\min\left\{1, \frac{4\rho}{\e^2 2^{L'(e_t)}}\right\}$. Output $e_t$ with probability $z'(e_t)$, giving it weight $1/z'(e_t)$.
\item[5.] Set $t:=t+1$. Go to step 2 if $t\leq m$.
\end{description}

Informally, Algorithm 3 underestimates strength of some edges until the data structures $D_{l, k}$ become properly connected but proceeds similarly to Algorithms 1 and 2 after that. Our main goal in the rest of the section is to show that this underestimation of strengths does not lead to a large increase in the size of the sample.

Note that not all $LK=\Theta(\log^2 n)$ coin tosses $A'_{l, k, e}$ per edge are necessary for an implementation of Algorithm 3 (in particular, we will show that Algorithm 3 can be implemented with $O(\log\log n)=o(LK)$ work per edge). However, the random variables $A'_{l,k,e}$ are useful for analysis purposes. We now show that Algorithm 3 outputs a sparsification $G'$ of $G$ with $O(n\log^3 n/\e^2)$ edges whp.

\begin{lemma} \label{lm:sparsification}
For any $\eps > 0$, w.h.p. the graph $G'$ is an $\eps$-sparsification of $G$.
\end{lemma}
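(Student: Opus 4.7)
The plan is to apply Corollary~\ref{cor:oversampling}. This reduces the sparsification claim to showing that the output sampling probability $z'(e)$ dominates the Bencz\'ur--Karger probability $p_e = \min\{1, \rho/(\e^2 s_e)\}$ for every edge $e$ with high probability. Since $z'(e) = \min\{1, 4\rho/(\e^2\, 2^{L'(e)})\}$, the required inequality $z'(e) \geq p_e$ is equivalent to $L'(e) \leq \log_2(4 s_e)$, which by definition of $L'(e)$ is the statement that for every edge $e$ and every level $l$ with $2^{-l} < 1/(2 s_e)$, the endpoints of $e$ are separated in $D_{l, K}$ at the end of the stream. This is exactly the conclusion of Lemma~\ref{lm:lower} transplanted to Algorithm~3, so the plan is to replay that proof while handling the additional dependencies introduced by Algorithm~3's conditional sampling.

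The first ingredient is a structural invariant: at every moment in the stream, $D_J$ is a refinement of $D_{J-1}$. I would prove this by induction on stream time $t$, using the fact that Algorithm~3 adds $e_t$ to $D_J$ only when its endpoints are already connected in $D_{J-1}$, so any new merge in $D_J$ is mirrored by an already-present connection in $D_{J-1}$. This yields the refinement scheme of Fig.~\ref{fig:f3}; in particular $P_{l, k+1}$ refines $P_{l, k}$ and $P_{l+1, 1}$ refines $P_{l, K}$ throughout, which is what allows iterating the refinement argument of Lemma~\ref{lm:lower} over the $K$ rounds at a fixed level.

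The second and central ingredient is a stochastic-domination statement: the edge set of $D_{l, k}$ at the end of the stream is contained in $\{e : A'_{l, k, e} = 1\}$, because the conditional connectivity check only filters edges out. Since the coins $A'_{l, k, e}$ are mutually independent Bernoullis with bias $2^{-l}$, the number of $(s_e+1)$-weak edges that end up in $D_{l, k}$ is dominated by a $\mathrm{Bin}(s_e(n-1), 2^{-l})$ random variable via Lemma~\ref{lm:k-weak}, and for $2^{-l} < 1/(2 s_e)$ the Chernoff bound from the proof of Lemma~\ref{lm:lower} gives at most $3(n-1)/4$ such edges per round with probability at least $1 - e^{-(n-1)/100}$. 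Feeding this upper bound into the supernode-isolation step of Lemma~\ref{lm:lower} and iterating $K = \log_{4/3} n$ rounds, no $(s_e+1)$-weak edge ends up with both endpoints in the same part of $P_{l, K}$, so $L'(e) \leq \log_2(4 s_e)$. A union bound over the $m$ edges and the $L = O(\log n)$ levels then yields sparsification via Corollary~\ref{cor:oversampling}.

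The main obstacle is precisely that, unlike Algorithm~1 whose rounds use fresh independent samples of the entire edge set, the sample at round $k$ of Algorithm~3 is filtered by a stream-time connectivity check that in principle couples it to the entire history of coin flips and stream order. The stochastic-domination observation is what neutralizes this obstacle: since the filter only removes edges, the upper tail on the number of surviving $(s_e+1)$-weak edges is still controlled by an independent $2^{-l}$-thinning of the weak edges, which is the only direction of control that Lemma~\ref{lm:lower}'s Chernoff argument requires.
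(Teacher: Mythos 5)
Your proposal is correct and rests on the same core observation as the paper, namely that Algorithm~3's conditional admission rule only shrinks each $D_{l,k}$'s edge set relative to an independent Bernoulli$(2^{-l})$ sample of $E$, so the weak-edge separation of Lemma~\ref{lm:lower} transfers; the paper packages this as a one-line coupling to Algorithm~1 (sharing the coins $A'_{l,k,e}$ gives $L'(e)\le L(e)$, hence $z'(e)\ge z(e)$, and then Corollary~\ref{cor:oversampling} applies), whereas you re-derive the supernode-isolation progress for Algorithm~3 directly from the stochastic-domination fact, which amounts to the same argument spelled out. One minor imprecision worth flagging: $L'(e_t)$ is evaluated at the moment $e_t$ arrives rather than at the end of the stream, but since each $D_{l,K}$ partition only coarsens over time, proving separation at the end of the stream is a harmless strengthening of what is actually needed at time $t$.
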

\begin{proof}
We can couple behaviors of Algorithms 1 and 3 using
the coin tosses $A'_{l, k, e}$ to show that $L(e)\geq L'(e)$ for every edge $e$, i.e. $z'(e)\geq z(e)$. Hence $G'$ is a sparsification of $G$ by Corollary \ref{cor:oversampling}.
\end{proof}

It remains to upper bound the size of the sample.  The following lemma is crucial to our analysis; its proof is deferred to the Appendix \ref{app:xd}
due to space limitations.

\begin{lemma} \label{lm:xd}
Let $G(V, E)$ be an undirected graph. Consider the execution of Algorithm 3, and for
$1 \leq J \le LK$ where $J=(l,k)$, let $X^{J}$ denote
the set of edges $e=(u, v)$ such that $u$ and $v$ are connected in $D_{J-1}$ when $e$ arrives. Then $|E\setminus X^{J}|=O(K 2^l n)$ with high probability.
\end{lemma}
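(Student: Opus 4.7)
My plan is to prove Lemma \ref{lm:xd} by a telescoping decomposition of $E\setminus X^J$ together with a per-step concentration argument. The starting observation is that the partition induced by $D_{J-1}$ at any moment refines the partition induced by $D_{J-2}$, since an edge is added to $D_{J-1}$ only after its endpoints are already connected in $D_{J-2}$. Hence $X^J\subseteq X^{J-1}$, the differences $X^{J'-1}\setminus X^{J'}$ for $J'=2,\ldots,J$ are pairwise disjoint, and because $X^1=E$ (the vertices are trivially connected in $D_0$) we obtain $|E\setminus X^J|=\sum_{J'=2}^J |X^{J'-1}\setminus X^{J'}|$.

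To bound a single difference, fix $J$ with $J-1=(l',k')$ and write $p=2^{-l'}$. Call an arriving edge $e_s$ an \emph{opportunity} if $e_s\in X^{J-1}$ and $u_s,v_s$ are not yet connected in $D_{J-1}$ just before $e_s$ is processed at level $J-1$. Whether $e_s$ is an opportunity is determined by the history strictly before that moment; conditioned on this history, the algorithm adds $e_s$ to $D_{J-1}$ exactly when the independent coin $A'_{l',k',e_s}$ of mean $p$ comes up heads. Each opportunity therefore either (i) has $A'=1$, in which case $e_s$ produces a new merge in $D_{J-1}$, or (ii) has $A'=0$, in which case $e_s$ contributes exactly one element to $X^{J-1}\setminus X^J$. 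Letting $N_t$ denote the total number of opportunities and $R_t$ the number of merges in $D_{J-1}$ through time $t$, this yields the identity $N_m=R_m+|X^{J-1}\setminus X^J|$, together with the deterministic bound $R_m\le n-1$.

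For a high-probability bound on $N_m$ I would apply Freedman's inequality to the martingale $W_t=R_t-p\,N_t$, whose increments are bounded by $1$ in absolute value and whose predictable quadratic variation at time $t$ is at most $p\,N_t$. Using the stopping time $\tau=\min\{t:N_t\ge c\,n/p\}$ for a constant $c>2$, the event $\{\tau\le m\}$ forces $W_\tau\le n-1-cn<-(c-1)n$ while the accumulated variance is at most $cn$; Freedman's inequality bounds this probability by $e^{-\Omega(n)}$, which is $\le n^{-d}$ for any fixed $d$ and $n$ sufficiently large. Consequently $N_m=O(n/p)=O(n\cdot 2^{l'})$, and therefore $|X^{J-1}\setminus X^J|=O(n\cdot 2^{l'})$ with very high probability.

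Assembling the pieces, for $J=(l,k)$ the telescoping sum has at most $K$ terms whose associated $l'_{J'-1}$ equals each level $l''\le l$, each of size $O(n\cdot 2^{l''})$, so the geometric sum is dominated by the last level and evaluates to $O(K\,n\,2^l)$; a union bound over the $O(\log^2 n)$ indices $J'$ preserves the high-probability guarantee. The main obstacle is the circular dependence in $N_m$: the indicator that $e_s$ is an opportunity is itself a function of the level-$(J-1)$ coin flips at earlier times, so one cannot apply Chernoff to a fixed candidate set. The martingale $W_t$ is exactly the device that decouples the coin at time $t$ from the opportunity indicator at time $t$, and this decoupling is the technical crux of the proof.
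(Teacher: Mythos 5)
Your proof is correct, and it takes a genuinely different---and substantially more direct---route than the paper.

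The paper proceeds by induction on $J$ and, within the inductive step, constructs a dynamically defined sequence of $2^l$-strongly-connected components $C_0, C_1, \ldots$ in the partially received graph. It invokes a structural fact (its Lemma~\ref{lm:components-weak}: a constant fraction of the vertices of a $\kappa$-connected component remain in non-singleton components after sampling at rate $\geq \lambda/\kappa$) together with an Azuma martingale over the sizes $|C_i|$ to show that the component count in $D_J$ drops geometrically, so only $O(2^l n)$ sampled edges can be ``wasted'' at that level. Your argument bypasses the strong-connectivity machinery entirely: an edge for which the level-$(l',k')$ coin is flipped while its endpoints are still split in $D_{l',k'}$ causes a merge with conditional probability exactly $p=2^{-l'}$, there can be at most $n-1$ merges ever, and a Freedman-type martingale on (merges $-\,p\cdot$ opportunities) converts this into the $O(n/p)$ bound on opportunities. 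The telescoping over $J'$ is then the same geometric summation that appears implicitly in the paper's induction. What your approach buys is a considerably shorter argument that needs only the algorithm's refinement mechanics and no graph-theoretic input about strong components; what the paper's approach buys is, arguably, nothing extra for this lemma---the strong-component viewpoint appears to be a holdover from the related analysis in \cite{benczurkarger96}.

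One small bookkeeping slip worth fixing: with the paper's definition of $X^J$ (connectivity of the endpoints of $e$ in $D_{J-1}$ at the moment $e$ \emph{arrives}, i.e.\ before the level-$(J-1)$ coin for $e$ is resolved), every opportunity belongs to $X^{J-1}\setminus X^J$ regardless of the coin outcome, so the correct accounting is simply $N_m = |X^{J-1}\setminus X^J|$ rather than $N_m = R_m + |X^{J-1}\setminus X^J|$. Since you only use the inequality $|X^{J-1}\setminus X^J| \leq N_m$, this does not affect your conclusion, but the cleaner identity makes the final step immediate.
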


\begin{lemma} \label{lm:bound}
The number of edges in $G'$ is $O(n \log^3 n/\e^2)$ with high probability.
\end{lemma}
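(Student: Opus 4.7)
The plan is to bound the expected size of $G'$ by stratifying the edges of $E$ by their final level $L'(e)$ and invoking Lemma~\ref{lm:xd} to control how many edges can land at each level. Since step 4 of Algorithm~3 samples each edge independently given $L'(e)$, I write $N_l := |\{e : L'(e) = l\}|$ and obtain
\[
\expect[|E(G')|]
\;=\; \sum_{l=1}^{L} N_l \cdot \min\!\left\{1,\; \frac{4\rho}{\e^{2}\, 2^{l}}\right\}
\;\le\; \sum_{l=1}^{L} |\{e : L'(e) \le l\}| \cdot \frac{4\rho}{\e^{2}\, 2^{l}}.
\]
So it suffices to bound $|\{e : L'(e) \le l\}|$, i.e., the number of edges that at their moment of arrival found their endpoints disconnected in $D_{l,K}$.

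Next I would translate this set into the language of Lemma~\ref{lm:xd}. In the linear indexing $J = K(l-1)+k$, the structure $D_{l,K}$ sits at $J = Kl$, which is exactly $J'-1$ for $J' = Kl+1 = (l+1,1)$. Hence $\{e : L'(e) \le l\} = E \setminus X^{J'}$, and Lemma~\ref{lm:xd} applied to this $J'$ gives $|\{e : L'(e) \le l\}| = O(K\,2^{l+1} n) = O(K\,2^{l} n)$ with high probability. Substituting and summing,
\[
\expect[|E(G')|]
\;\le\; \sum_{l=1}^{L} O(K\, 2^{l} n) \cdot \frac{4\rho}{\e^{2}\, 2^{l}}
\;=\; O\!\left(\frac{L\, K\, n\, \rho}{\e^{2}}\right)
\;=\; O\!\left(\frac{n \log^{3} n}{\e^{2}}\right),
\]
using $L, K, \rho = \Theta(\log n)$.

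Finally, to lift this expectation bound to high probability I would condition on all randomness generated by the $A'_{l,k,e}$ coin tosses inside the main loop of Algorithm~3, which fixes the values $\{L'(e)\}_{e\in E}$ and hence the probabilities $z'(e)$. Given this conditioning, $|E(G')|$ is a sum of independent Bernoulli trials (the step~4 coins, which are drawn from fresh randomness), so a standard Chernoff bound for sums of non-identical independent Bernoullis delivers multiplicative concentration around the conditional mean, which is at most $O(n \log^{3} n/\e^{2})$ on the high-probability event provided by Lemma~\ref{lm:xd}. A union bound over the failure events of Lemma~\ref{lm:xd} across $l = 1,\ldots,L$ then yields the claimed $O(n\log^3 n/\e^{2})$ bound w.h.p. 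The main obstacle is really Lemma~\ref{lm:xd} itself, deferred to the appendix; once it is in hand, what remains is arithmetic and a routine concentration argument, with the only delicate point being the verification that the $z'(e)$ are fully determined by the coin tosses made before step~4, so that Chernoff can be applied to genuinely independent indicators.
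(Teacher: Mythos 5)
Your proof is correct and follows essentially the same route as the paper: stratify edges by the final level $L'(e)$, translate the set $\{e : L'(e) \le l\}$ into the complement of $X^{(l+1,1)}$ via the indexing $J=K(l-1)+k$, invoke Lemma~\ref{lm:xd} for the bound $O(K2^l n)$, and sum the geometric series. The paper writes the decomposition via $Y_l = X^{(l,1)}\setminus X^{(l+1,1)}$ (which equals your $\{e:L'(e)=l\}$) and uses the per-term bound $|Y_l|=O(K2^l n)$, but the arithmetic and the constants are identical up to a factor of two. You spell out the concentration step — conditioning on the $A'_{l,k,e}$ coins to fix all $z'(e)$ and then applying Chernoff to the independent step-4 indicators — slightly more explicitly than the paper, which just says it ``follows by standard concentration inequalities''; that extra care is welcome but does not change the substance.
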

\begin{proof}
Recall that Algorithm 3 samples an edge $e_t=(u_t, v_t)$ with probability $z'(e_t)=\min\left\{1, \frac{4\rho}{\e^2 2^{L'(e_t)}}\right\}$, where
$L'(e_t)$ is the minimum $l$ such that $u_t$ and $v_t$ are not connected in $D_{l, K}$.
As before, for $J=(l,k)$, we denote by $X^J$ the set of edges $e=(u, v)$ such that $u$ and $v$ are connected in $D_{J-1}$ when $e$ arrives. 
Note that w.h.p. $X^{(L, 1)}=\emptyset$ w.h.p. by our choice of $L=\log(2n)$. For each $1\leq l\leq L$, let $Y_l=X^{(l, 1)}\setminus X^{(l+1, 1)}$. We have by Lemma~\ref{lm:xd} that $\sum_{1\leq j\leq l}|Y_j|=O(K2^ln)$ w.h.p. Also note that edges in $Y_l$ are sampled with probability at most $\frac{4\rho}{\e^2 2^{l-1}}$.
Hence, we get that the expected number of edges in the sample is at most
\begin{equation*}
\sum_{l=1}^{L} |Y_l|\cdot \frac{4\rho}{\e^2 2^{l-1}}=O\left(\sum_{l=1}^{L} K 2^l n\cdot \frac{4\rho}{\e^2 2^{l-1}}\right)=O(n\log^3 n/\e^2).
\end{equation*}

The high probability bound now follows by standard concentration inequalities.
\end{proof}

\smallskip
Finally, we have the following theorem.

\begin{theorem}
For any $\eps > 0$ and $d > 0$,
there exists a one-pass algorithm that given the edges of an undirected graph $G$ streamed in adversarial order, produces an $\eps$-sparsifier  $G'$ with $O(n\log^3 n/\e^2)$ edges with probability at least $1-n^{-d}$. The algorithm takes $O(\log\log n)$ amortized time per edge and uses $O(\log^2 n)$ space per node.
\end{theorem}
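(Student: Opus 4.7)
The plan is to assemble the theorem from the correctness, size, and resource analysis already prepared, and to describe how Algorithm~3 can be implemented within the stated bounds. For correctness, Lemma~\ref{lm:sparsification} gives that $G'$ is an $\eps$-sparsifier with high probability, and Lemma~\ref{lm:bound} bounds the edge count of $G'$ by $O(n\log^3 n/\eps^2)$ with high probability. A union bound over the failure events, together with the standard choice $\rho = 16(d+2)\ln n$ inherited from Theorem~\ref{thm:bk-sampling}, gives overall success probability at least $1-n^{-d}$.

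For the space bound, the algorithm maintains the $LK = O(\log^2 n)$ connectivity data structures $\{D_{l,k}\}$, each stored as a union--find structure on $n$ nodes using $O(n)$ space. This gives total space $O(n\log^2 n)$, i.e., $O(\log^2 n)$ per node.

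For the time bound, the goal is $O(\log\log n)$ amortized per edge, versus the naive $\Theta(\log^2 n)$ obtained by iterating over all $(l,k)$. Two ideas drive the efficient implementation. First, the family $D_0, D_1, \ldots, D_{LK}$ always satisfies a refinement-chain property: whenever $u$ and $v$ are connected in $D_J$, they are also connected in $D_{J'}$ for every $J' < J$. Thus for each arriving edge $e_t = (u_t,v_t)$, the indices $J$ at which $u_t,v_t$ are connected in $D_J$ form a prefix of $\{0,1,\ldots,LK\}$, and $L'(e_t)$ can be located via binary search using $O(\log(LK)) = O(\log\log n)$ find operations, each $O(1)$ amortized. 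Second, rather than iterating over every $(l,k)$ and tossing the Bernoulli$(2^{-l})$ coin $A'_{l,k,e}$, we lazily sample the positions of the $1$'s using a geometric distribution with parameter $2^{-l}$ per level, and only at those positions perform an insertion attempt (two finds plus a possible union in the appropriate $D_{l,k}$, all $O(1)$ amortized).

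The main obstacle is to show that the expected work from these insertion attempts amortizes to $O(\log\log n)$ per edge. A direct estimate of the attempts at a single edge only gives $\sum_l K\cdot 2^{-l} = O(K) = O(\log n)$, which is too weak; the sharper amortized bound will rely on Lemma~\ref{lm:xd} to charge the work at each level $J=(l,k)$ against the set $X^J$ of edges whose endpoints are still connected in $D_{J-1}$, and on the refinement chain to skip levels for which the endpoints have already separated. Combined with the $O(\log\log n)$ binary-search cost and the $O(1)$ amortized cost of the union--find operations, this yields the claimed per-edge time bound and completes the theorem.
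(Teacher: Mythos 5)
Your assembly of the correctness and size bounds from Lemmas~\ref{lm:sparsification} and~\ref{lm:bound} is fine, as are the space bound and the binary-search step for locating the crossover index $J^*$ via the refinement chain. The gap is in the argument that the remaining work after the binary search is $O(1)$ amortized per edge.

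You correctly note that naively tossing every coin $A'_{l,k,e}$ would cost $\sum_l K\cdot 2^{-l}=O(K)=O(\log n)$ expected attempts per edge, but you then reach for Lemma~\ref{lm:xd} to close the gap. That lemma controls $|E\setminus X^J|$ and is invoked in Lemma~\ref{lm:bound} to bound the \emph{size} of the output sample; it does not enter the running-time analysis at all, and the charging scheme you sketch would at best reproduce the $O(\log n)$-per-edge accounting you already declared too weak. The actual mechanism is local and elementary: once the binary search finds $J^*$, the algorithm flips coins only sequentially from $J^*$ onward and halts as soon as a single coin $A'_{l,k,e}$ comes up $0$, because at that point the endpoints remain disconnected in $D_J$ and, by refinement, in every later $D_{J'}$, so $e$ is never inserted again. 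Each coin has heads-probability $2^{-l}\leq 1/2$, so the expected number of insertions beyond $J^*$ is bounded by a geometric series, i.e.\ $\expect\bigl[\sum_{J\geq J^*}U'_{l,k,e}\bigr]=O(1)$; the binary search therefore dominates, giving the stated $O(\log\log n)$ per edge. Relatedly, your ``lazy geometric sampling per level'' idea enumerates heads-positions across all $k$ for every $l$ independently of the connectivity state, which is wasteful precisely because almost all of those positions are moot once the endpoints have separated; the dependence of the coin sequence on the refinement state is what must be exploited, and once you do, no amortization via Lemma~\ref{lm:xd} is needed.
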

\begin{proof}
Lemma \ref{lm:sparsification} and Lemma \ref{lm:bound} together establish that
$G'$ is an $\eps$-sparsifier  $G'$ with $O(n\log^3 n/\e^2)$ edges.
It remains to prove the stated runtime bounds.

Note that when an edge $e_t=(u_t, v_t)$ is
processed in step 3 of Algorithm 3, it is  not necessary to add $e_t$ to any data structure $D_{J}$ in which $u_t$ and $v_t$ are already connected. Also, since $D_{J}$  is a refinement of $D_{J'}$ whenever $J'\leq J$, for every edge $e_t$ there exists $J^*$ such that $u_t$ and $v_t$ are connected in $D_{J}$  for any $J\leq J^*$ and not connected for any $J\geq J^*$. The value of $J^*$ can be found in $O(\log \log n)$ time by binary search. Now we need to keep adding $e_t$ to $D_{J}$,  for each  $J\geq J^*$ such that $U_{l, k, e_t}=1$. However, we have that $\expect\left[\sum_{J\geq J^*} U'_{l, k, e_t}\right]=O(1)$. Amortizing over all edges, we get $O(1)$ per edge using standard concentration inequalities.
\end{proof}

\begin{corollary} \label{cor:logn-sample}
For any $\eps > 0$ and $d > 0$,
there exists a one-pass algorithm that given the edges of an undirected graph $G$ streamed in adversarial order, produces an $\eps$-sparsifier  $G'$ with $O(n\log n/\e^2)$ edges with probability at least $1-n^{-d}$. The algorithm takes amortized
$O( \log\log n+(n/m)\log^5 n)$ time per edge and uses $O(\log^3 n)$ space per node.
\end{corollary}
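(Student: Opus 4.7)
The plan is to use the same two-stage composition that produced the corresponding statement at the end of Section~\ref{sec:refinement}, but with Algorithm 3 replacing Algorithm 2 as the streaming stage. Concretely, I would run Algorithm 3 on the stream with parameter $\e/3$ (to leave slack for the composition) to produce, in a single pass, an intermediate sparsifier $G'$ with $O(n\log^3 n/\e^2)$ edges and integer weights bounded by $\mathrm{poly}(n)$. After the stream ends, I would invoke the Bencz\'{u}r--Karger algorithm directly on $G'$ with parameter $\e/3$, yielding a final graph $G''$ with $O(n\log n/\e^2)$ edges. Because $G'$ already fits in $\tilde O(n)$ space, this post-processing stage respects the semi-streaming restriction; no additional pass over the original stream is made.

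For correctness, by the theorem just proved $G'$ is an $(\e/3)$-sparsifier of $G$ with probability at least $1-n^{-d}$, and by Theorem~\ref{thm:bk-sampling} $G''$ is an $(\e/3)$-sparsifier of $G'$ with probability at least $1-n^{-d}$. Composing the two approximations, every cut in $G''$ lies within $(1\pm \e/3)^2 \subseteq (1\pm \e)$ of the corresponding cut in $G$; a union bound over the two failure events keeps the overall success probability at least $1-n^{-d}$ after adjusting $d$ by a constant. The polynomial bound on edge weights in $G'$ justifies invoking the Bencz\'{u}r--Karger routine as in Remark~\ref{rmk:weights}, paying only an $O(\log^2 n)$ overhead per edge of $G'$.

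For the time and space bounds, the streaming stage contributes $O(\log\log n)$ amortized time per edge of $G$ and $O(\log^2 n)$ space per node by the preceding theorem. The Bencz\'{u}r--Karger stage runs in $O(m' \log^2 n)$ total time on an $m'$-edge polynomially weighted graph, giving $O\bigl((n\log^3 n/\e^2)\log^2 n\bigr) = O(n\log^5 n/\e^2)$ total work. Amortizing this offline cost over the $m$ streamed edges yields an extra $O((n/m)\log^5 n)$ per edge, so the combined amortized time is $O(\log\log n + (n/m)\log^5 n)$ per edge. The storage of $G'$ dominates the space requirement, giving $O(\log^3 n/\e^2)$, or simply $O(\log^3 n)$ per node after folding constants and $\e$ into the statement.

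The main subtlety, and the only point that needs care, is that the post-processing inherits the guarantees of Theorem~\ref{thm:bk-sampling} even though $G'$ is a random object produced by Algorithm 3 rather than a fixed graph: this is fine because Theorem~\ref{thm:bk-sampling} holds for any input graph with the stated probability, so conditioning on any realization of $G'$ still yields the desired bound. Everything else is a direct calculation analogous to the $O(\log n)$-pass corollary.
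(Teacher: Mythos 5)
Your proof is correct and takes essentially the same approach as the paper: run Algorithm~3 to get an $O(n\log^3 n/\eps^2)$-edge intermediate sparsifier, then post-process with Bencz\'{u}r--Karger sampling (with $O(\log^2 n)$ overhead per edge due to polynomial weights), amortizing the $O(n\log^5 n/\eps^2)$ offline work over the $m$ streamed edges. Your explicit $\eps/3$ split and the remark about why Theorem~\ref{thm:bk-sampling} applies to the random graph $G'$ are slightly more careful than the paper's brief proof, which even contains a small typo in its running-time recap ($O(\log n\log\log n+\cdots)$ instead of the stated $O(\log\log n+\cdots)$); your account correctly matches the corollary's claim.
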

\begin{proof}
One can obtain a sparsification of $G'$ with $O(n \log^3 n/\e^2)$ edges by running Algorithm 3 on the input graph $G$, and then run the Bencz\'{u}r-Karger algorithm
on $G'$ without violating the restrictions of the semi-streaming model. Note that even though $G'$ is a weighted graph, this will have overhead $O(\log^2 n)$ per edge of $G'$ since the weights are polynomial. Since $G'$ has $O(n\log^3 n)$ edges, the amortized work per edge of $G$ is $O(\log n\log\log n+(n/m)\log^5 n)$. The Bencz\'{u}r-Karger algorithm can be implemented using space proportional to the size of the graph, which yields $O(\log^3 n)$ space per node.
\end{proof}

\begin{remark}
The algorithm avove improves upon the runtime of the Bencz\'{u}r-Karger sparsification scheme when $m=\omega(n\log^3 n)$.
\end{remark}

\medskip
\noindent
{\bf Sparse $k$-connectivity Certificates:}
Our analysis of the performance of refinement sampling is along broadly similar lines to the analysis of the strength estimation routine
in \cite{benczurkarger96}. To make this analogy more precise, we note that refinement sampling as used in Algorithm 3 in fact produces a \emph{sparse connectivity certificate} of $G$, similarly to the algorithm of Nagamochi-Ibaraki\cite{nagamochi-ibaraki}, although with slightly weaker guarantees
on size.

A {\em $k$-connectivity certificate}, or simply a {\em $k$-certificate},
for an $n$-vertex graph $G$ is a subgraph $H$ of $G$ such that
contains all edges crossing cuts of size $k$ or less in $G$. Such a certificate
always exists with $O(kn)$ edges, and moreover, there are graphs where
$\Omega(kn)$ edges are necessary.
The algorithm of \cite{nagamochi-ibaraki} depends on random access to edges of $G$ to
produce a $k$-certificate with $O(kn)$ edges in $O(m)$ time.
We now show that refinement sampling gives a one-pass algorithm
to produce a $k$-certificate with $O(kn \log^2 n)$ edges in 
time $O(m\log\log n+n\log n)$. The result is summarized in the following corollary:

\begin{corollary}\label{cor:sparse-cert}
Whp for each $l\geq 1$ the set $X(D_{l, K})$ is a $2^l$-certificate of $G$
with $O(\log^2n)2^l n$ edges. 
\end{corollary}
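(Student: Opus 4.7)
\begin{proofsketch}

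The plan is in two parts: bound $|X(D_{l,K})|$, and show that $X(D_{l,K})$ contains every edge on a cut of size at most $2^l$ in $G$.

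For the size bound I would first observe that if the endpoints of an edge $e$ lie in distinct components of the final data structure $D_{l,K}$, then they were already in distinct components of $D_{l,K}$ at the moment $e$ arrived, since components of $D_{l,K}$ only merge as the stream progresses. In the notation of Lemma~\ref{lm:xd} this inclusion reads $X(D_{l,K}) \subseteq E \setminus X^{(l+1,1)}$, because $D_{(l+1,1)-1}$ in the lexicographic order used by Algorithm~3 is exactly $D_{l,K}$. Plugging into Lemma~\ref{lm:xd} then gives $|X(D_{l,K})| = O(K\cdot 2^{l+1}\, n) = O(2^l n\log n)$, comfortably within the claimed $O(2^l n\log^2 n)$ (the boundary case $l=L$ is trivial since $|X(D_{L,K})|\le m\le n^2$ fits in the bound).

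For the certificate property, the plan is to couple Algorithm~3 with Algorithm~1 by identifying Algorithm~3's coin $A'_{l,k,e}$ with Algorithm~1's indicator that $e$ is drawn in the level-$l$, iteration-$k$ sample (both are Bernoulli$(2^{-l})$). Under this coupling I would prove by induction on $J=(l,k)$ in lexicographic order that the components of $D_{l,k}$ refine the parts of $S_{l,k}$. The inductive step uses that Algorithm~3's addition condition---``endpoints connected in the \emph{partial} $D_{J-1}$ at the instant $e$ arrives''---is strictly more restrictive than the condition implicit in $\refine$, which works off the \emph{fully processed} $S_{l,k-1}$. Hence every edge actually stored in $D_{l,k}$ is also one that Algorithm~1 would use when forming $S_{l,k}$, forcing the connected components of the stored edges to nest inside the parts of $S_{l,k}$.

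From the coupling one obtains $X(D_{l,K}) \supseteq X(S_{l,K})$. By Lemma~\ref{lm:lower}, with high probability $X(S_{l,K})$ contains every edge $e$ with $s_e\le 2^{l-2}$. Since any edge on a cut of size at most $k$ in $G$ satisfies $s_e\le k$ (a $k'$-edge-connected vertex-induced subgraph containing $e$ restricts every $G$-cut through $e$ to be of size at least $k'$), $X(D_{l,K})$ contains every edge on a cut of size at most $2^{l-2}$, i.e.\ it is a $2^{l-2}$-certificate. A constant shift $l\mapsto l+O(1)$ in the indexing then absorbs this factor and yields a $2^l$-certificate, the constant being swallowed by the $O(\cdot)$ in the edge bound. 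I expect the main obstacle to be verifying the refinement-by-induction claim $D_{l,k}\preceq S_{l,k}$, especially at the level boundary $J=(l-1,K)\to(l,1)$, where Algorithm~3's addition condition switches from the previous level's final data structure to the current level's start; the underlying inequality, however, is just ``more restrictive addition condition $\Rightarrow$ fewer stored edges $\Rightarrow$ finer components.''

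\end{proofsketch}
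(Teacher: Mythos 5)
Your proposal is correct and takes essentially the same route as the paper's (two-sentence) proof: the paper implicitly relies on the coupling with Algorithm~1 already set up in Lemma~\ref{lm:sparsification} together with the argument inside Lemma~\ref{lm:lower} to conclude that $X(D_{l,K})$ catches all weak edges, and invokes Lemma~\ref{lm:xd} for the size bound via exactly the inclusion $X(D_{l,K})\subseteq E\setminus X^{(l+1,1)}$ that you identify. The constant shift you flag is real but harmless and is in fact shared by the paper's own wording (the proof of Lemma~\ref{lm:lower} only separates edges with $s_e<2^{l-1}$ at level $l$, not all $2^l$-weak edges); it is swallowed by the $O(\cdot)$ on the edge count, so the statement is correct up to the usual constant-factor looseness.
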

\begin{proof}
Whp $X(D_{l, K})$ contains all $2^l$-weak edges, in particular those that cross cuts of size at most $2^l$. The bound on the size follows by Lemma~\ref{lm:xd}.
\end{proof}

\begin{remark}\label{rmk:weights}
Algorithms 1-3 can be easily extended to graphs with polynomially bounded integer weights on edges. 
If we denote by $W$ the largest edge weight, then it is sufficient to set the number of levels $L$ to $\log (2nW)$ instead of $\log (2n)$ and the number of passes to $\log_{4/3} nW$ instead of $\log_{4/3} n$.
A weighted edge is then viewed as several parallel edges, and sampling can be performed efficiently for such edges by sampling directly from the corresponding binomial distribution.
\end{remark}

\section{A Linear-time Algorithm for $O(n \log n/\e^2)$-size Sparsifiers} \label{sec:two-pass}

We now present an algorithm for computing an $\eps$-sparsification with $O(n \log n/\eps^2)$
edges in $O(m\log \frac1{\delta}+ n^{1+\delta})$ expected time for any $\delta>0$. Thus, the algorithm runs in linear-time whenever $m=\Omega(n^{1 + \Omega(1)})$.
We note that no (randomized) algorithm can output an $\eps$-sparsification in sub-linear time even if
there is no restriction on the size of the sparsifier. This is easily seen by considering the
family of graphs formed by disjoint union of two $n$-vertex graphs $G_1$ and $G_2$
with $m$ edges each, and a single edge $e$ connecting the two graphs.
The cut that separates $G_1$ from $G_2$ has a single edge $e$, and hence
any $\eps$-sparsifier must include $e$. On the other hand, it is easy to see that $\Omega(m)$
probes are needed in expectation to discover the edge $e$.

Our algorithm can in fact be viewed as a {\em two-pass} streaming algorithm, and we present is
as such below.
As before, let $G=(V, E)$ be an undirected unweighted graph. We will use Algorithm 3 as a building block of our construction. We now describe each of the passes.

\begin{description}
\item[First pass:] Sample every edge of $G$ uniformly at random with probability $p=4/\log n$. Denote the resulting graph by $G'=(V, E')$. Give the stream of sampled edges to Algorithm 3 as the input stream, and save the state of the connectivity data structures $D_{l, K}$ for all $1\leq l\leq L$ at the end of execution. For $1 \le l \le L$, let $D^*_{l}$ denote these connectivity data structures (we will also refer to $D^*_l$ as partitions in what follows).

\end{description}

Note that the first pass takes $O(m)$ expected time since Algorithm 3 has an overhead $O(\log\log n)$ time per edge and the expected size of $|E'|$ is $|E|/\log n$.

Recall that the partitions $D^*_l$ are used in Algorithm 3 to estimate strength of edges $e\in E'$. We now show that these partitions can also be used to estimate strength of edges in $E$. The following lemma establishes a relationship between the edge strengths in $G'$ and $G$.
For every edge $e \in E$, let $s'_e$ denote the strength of edge $e$ in the graph $G'_e(V, E'\cup \{e\})$.

\begin{lemma}
Whp $s'_e \le s_e \le 2 s_e'\log n + \rho\log n$ for all $e \in E$,
where $\rho=16(d+2)\ln n$ is the oversampling parameter in Karger sampling.
\end{lemma}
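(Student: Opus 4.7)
The plan is to prove the two inequalities separately: the lower bound $s'_e\le s_e$ is a deterministic monotonicity statement, while the upper bound follows by applying the Bencz\'{u}r-Karger sampling guarantee to each strong component of $G$ and taking a union bound. For the lower bound I would observe that $G'_e=(V,E'\cup\{e\})$ is always a subgraph of $G$, since $E'\subseteq E$ and $e\in E$. If $C\subseteq V$ is the vertex set of the $s'_e$-strong component containing $e$ in $G'_e$, then the induced subgraph $G'_e[C]$ is $s'_e$-edge-connected, so $G[C]\supseteq G'_e[C]$ is also $s'_e$-edge-connected and contains $e$; therefore the strong connectivity of $e$ in $G$ is at least $s'_e$, giving $s_e\ge s'_e$.

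For the upper bound I would split on the magnitude of $s_e$. If $s_e\le \rho\log n$ the bound $s_e\le 2 s'_e\log n+\rho\log n$ is immediate. Otherwise let $C$ be the $s_e$-strong component of $e$ in $G$; regarded as its own graph, $C$ has minimum cut at least $s_e$, so every edge of $C$ has strength within $C$ at least $s_e\ge \rho\log n$. The first pass samples each edge of $C$ independently with probability $p=4/\log n$, which therefore satisfies $p\ge \rho/(\eps^2 s)$ for $\eps=\tfrac12$ and every edge-strength $s$ arising inside $C$. By Corollary~\ref{cor:oversampling} applied to Theorem~\ref{thm:bk-sampling} on the component $C$, the edges of $C\cap G'$ reweighted by $1/p$ form a $\tfrac12$-sparsifier of $C$ with probability at least $1-|V(C)|^{-d}$. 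In particular the unweighted minimum cut of $C\cap G'$, and hence of $C\cap G'_e$, is at least $p s_e/2=2s_e/\log n$, so $V(C)$ sits inside a single $(2s_e/\log n)$-strong component of $G'_e$ containing $e$. This gives $s'_e\ge 2s_e/\log n$, and hence $s_e\le s'_e\log n/2\le 2 s'_e\log n+\rho\log n$ as required.

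Finally, to make these statements hold simultaneously for all $e\in E$, I would union-bound over the vertex sets that can arise as strong components of $G$. Because $k$-strong components across varying $k$ form a laminar family on $V$, there are at most $2n-1$ distinct such vertex sets, and the conclusion of the preceding paragraph for a given edge $e$ depends only on $V(C)$ (not on $e$ itself). The total failure probability is therefore bounded by $O(n)\cdot n^{-d}=n^{-\Omega(1)}$ for any constant $d$. I expect the main obstacle to be the bookkeeping in this union bound and the verification that the Bencz\'{u}r-Karger sampling condition genuinely holds at every edge inside $C$; the probabilistic core is just the sampling theorem already proved in Section~\ref{sec:prelim}, and no new concentration estimate is needed.
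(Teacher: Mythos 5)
Your proof is correct and takes essentially the same route as the paper: the lower bound is subgraph monotonicity, and the upper bound applies a uniform-sampling cut-preservation theorem to the $s_e$-strong component $C$ when $s_e>\rho\log n$ (and is trivial otherwise); your explicit union bound over the $O(n)$ distinct vertex sets in the laminar family of strong components is a welcome elaboration that the paper elides. One point to tighten: you state the per-component success probability as $1-|V(C)|^{-d}$, which for a constant-size $C$ is vacuous and does not by itself support your subsequent $O(n)\cdot n^{-d}$ total; the resolution is that $\rho=16(d+2)\ln n$ is calibrated to the global $n$ rather than to $|V(C)|$, so Theorem~\ref{thm:bk-sampling} applied to $C$ with this inflated $\rho$ in fact gives failure probability at most $|V(C)|^2\,n^{-(d+2)}\le n^{-d}$, which is the bound your union bound actually uses.
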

\begin{proof}
The first inequality is trivially true since $G'_e$ is a subgraph of $G$. For the second one, let us first
consider any edge $e\in E$ with $s_e > \rho \log n$. Let $C$ be the $s_e$-strong
component in $G$ that contains the edge $e$. By Karger's theorem, whp the capacity of any cut
defined by a partition of vertices in $C$ decreases by a factor of at most $2\log n$ after sampling edges of $G$ with probability $p=4/\log n=\rho/((1/2)^2\rho\log n)$, i.e. in going
from $G$ to $G'$. So any cut in $C$, restricted to edges in $E'$ has size at least $s_e/(2\log n)$, implying that $s'_e\geq s_e/(2\log n)$. Finally, for any edge $e$ with $s_e\leq \rho\log n$, $s'_e$ is at least $1$, and the inequality thus follows.
\end{proof}

We now discuss the second pass over the data.
Recall that in order to estimate the strength $s'_e$ of an edge $e\in E'$, Algorithm 3 finds the minimum $L(e)$ such that the endpoints of $e$ are not connected in $D^*_{l}$ by doing a binary search over the range $[1..L]$. For an edge $e\in G$ we estimate its strength in $G'_e$ by doing binary
search as before, but stopping the binary search as soon as the size of the interval is smaller than $\delta L$, thus taking $O(\log\frac{1}{\delta})$ time per edge and obtaining an estimate that is away from the true value by a factor of at most $n^{\delta}$. Let $s''_e$ denote this estimate, that is, $s'_e n^{-\delta} \le s''_e \le s'_e n^\delta$. Now sampling every edge with probability $p_e=\min\left\{\frac{\rho n^{\delta}}{\e^2 s''_e}, 1\right\}$  and giving each sampled edge weight $1/p_e$
yields an $\e$-sparsification $G''=(V, E'')$ of $G$ whp.
Moreover, we have that w.h.p. $|E''|=\tilde{O}(n^{1+\delta})$.
Finally, we provide the graph $G''$ as input to Algorithm 3 followed by applying Bencz\'{u}r-Karger sampling as outlined in Corollary \ref{cor:logn-sample},
 obtaining a sparsifier of size $O(n\log n/\e^2)$.
We now summarize the second pass.

\begin{description}
\item[Second pass:] For each edge $e$ of the input graph $G$:
\begin{itemize}
\item Perform $O(\log\frac1{\delta})$ steps of binary search to calculate $s_e''$.
\item
Sample edge $e$ with probability $p_e=\min\{\frac{\rho n^{\delta}}{\e^2 s''_e}, 1\}$.
\item
If $e$ is sampled, assign it a weight of $1/p_e$, and pass it as an input
to a fresh invocation of Algorithm 3, followed by Bencz\'{u}r-Karger sampling as
outlined in Corollary \ref{cor:logn-sample}, giving the final sparsification.
\end{itemize}
\end{description}
Note that the total time taken in the second pass is $O(m\log\frac1{\delta})+\tilde O(n^{1+\delta})$.
We have proved the following

\begin{theorem}
For any $\eps > 0$ and $\delta > 0$, there exists a two-pass algorithm that produces
an $\e$-sparsifier in time $O(m\log\frac1{\delta})+\tilde O(n^{1+\delta})$. Thus the algorithm
runs in linear-time when $m=\Omega(n^{1+\delta})$ and $\delta$ is constant.
\end{theorem}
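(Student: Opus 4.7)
The plan is to verify the three claims bundled in the theorem: that the described algorithm uses only two passes, that its output is an $\eps$-sparsifier of $G$ whp, and that its runtime matches $O(m\log(1/\delta)) + \tilde O(n^{1+\delta})$. The two-pass structure is evident from the construction: the first pass samples every edge with probability $4/\log n$ and feeds the sampled stream to one invocation of Algorithm~3, storing the partitions $D^*_l$; the second pass then processes each incoming edge $e$ by using $D^*_l$ to compute $s''_e$ via truncated binary search, sub-samples $e$ with probability $p_e=\min\{\rho n^\delta/(\eps^2 s''_e),1\}$, and pipes the surviving edges into a fresh invocation of Algorithm~3 followed by Bencz\'ur--Karger sampling as in Corollary~\ref{cor:logn-sample}.

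For correctness, I would combine the lemma proved immediately before the theorem (giving $s'_e \leq s_e \leq 2 s'_e \log n + \rho \log n$ whp) with the binary-search guarantee $s'_e n^{-\delta}\leq s''_e\leq s'_e n^{\delta}$, obtaining $s''_e/n^\delta \leq s'_e \leq s_e$ and hence $p_e \geq \min\{\rho/(\eps^2 s_e),\,1\}$. Corollary~\ref{cor:oversampling} then implies that $G''$, with weights $1/p_e$, is an $\eps$-sparsifier of $G$ whp. The weights are polynomially bounded, so Corollary~\ref{cor:logn-sample} applies to $G''$ and yields an $\eps$-sparsifier of $G''$ of size $O(n\log n/\eps^2)$; rescaling $\eps$ by a constant factor at each stage preserves the overall $\eps$-sparsification guarantee by the standard multiplicative composition of cut approximations.

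For the runtime, I would bound the intermediate size $|E''|$ via $\expect|E''|=\sum_e p_e$. Using $s''_e = \Omega(s'_e n^{-\delta})$ and $s'_e = \Omega(s_e/\log n)$ for edges with $s_e=\Omega(\rho\log n)$ (edges below this threshold contribute only $\tilde O(n)$ by Lemma~\ref{lm:k-weak}), we get $p_e = \tilde O(n^{O(\delta)}/s_e)$, and summing against the Bencz\'ur--Karger inequality $\sum_e 1/s_e = O(n)$ gives $\expect|E''| = \tilde O(n^{1+\delta})$, where we absorb any constant blowup in the exponent into the binary-search precision parameter; a Chernoff bound converts this to whp. The first pass then costs $O(m)$ for reading and sampling plus $O(|E'|\log\log n)=O(m)$ for Algorithm~3 on the sampled stream. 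The second pass costs $O(m\log(1/\delta))$ for the $m$ truncated binary searches plus $\tilde O(n^{1+\delta})$ for running Algorithm~3 and BK-sampling on $G''$ via Corollary~\ref{cor:logn-sample}. Summing gives the claimed $O(m\log(1/\delta))+\tilde O(n^{1+\delta})$ bound. The main obstacle I foresee is the careful bookkeeping of how the approximation slack in $s''_e$ propagates through the exponent of $n$, so that the final size bound is $\tilde O(n^{1+\delta})$ rather than $\tilde O(n^{1+c\delta})$ for some $c>1$; this is resolved by choosing the binary-search stopping interval as a sufficiently small constant multiple of $\delta L$.
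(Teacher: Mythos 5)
Your proof is correct and follows the paper's construction and argument essentially step by step: first pass subsamples $G$ and runs Algorithm~3 to build the $D^*_l$, the intermediate lemma relates $s'_e$ to $s_e$, the second pass does truncated binary search and sub-samples, and the final stage re-runs Algorithm~3 plus Bencz\'ur--Karger on $G''$. You are, if anything, more careful than the paper on two points that it glosses over: (i) the composition of two $(1\pm\eps)$ approximations requires rescaling $\eps$ by a constant, and (ii) the combination $s''_e \geq s'_e n^{-\delta}$ with $s'_e = \Omega(s_e/\log n)$ actually yields $p_e = \tilde O(n^{2\delta}/s_e)$, so the stated $\tilde O(n^{1+\delta})$ size bound really requires choosing the binary-search stopping interval to be a suitable constant fraction of $\delta L$ — exactly the fix you identify, which leaves the $O(\log\frac{1}{\delta})$ per-edge cost unchanged. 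No gaps.
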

\newpage
\pagenumbering{Roman} 

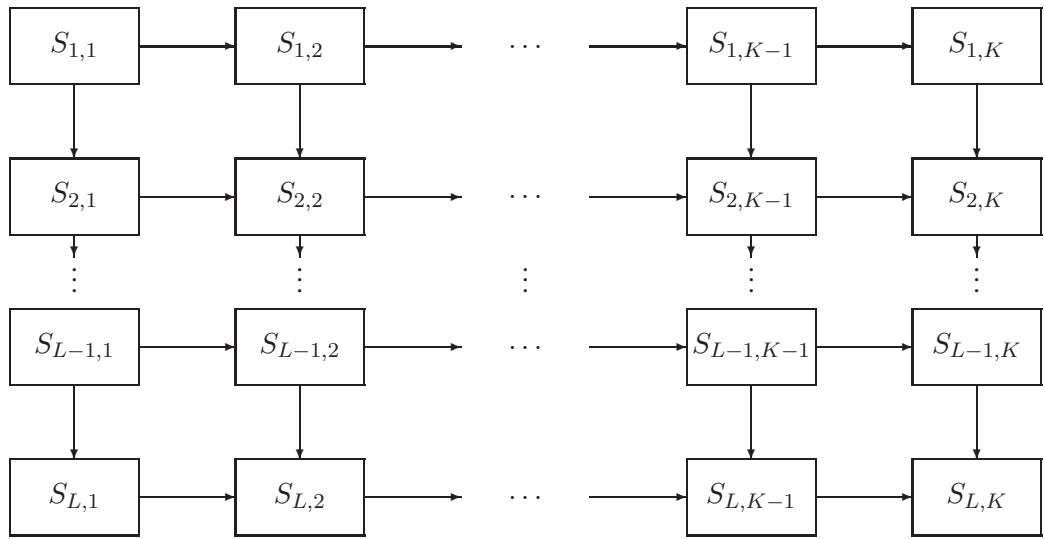
\begin{figure}
\begin{picture}(15,9)(0,0)
\put(1, 8){\framebox(1.7, 1){$S_{1,1}$}}
\put(4, 8){\framebox(1.7, 1){$S_{1,2}$}}
\put(7, 8){\makebox(1.7, 1){$\ldots$}}
\put(10, 8){\framebox(1.7, 1){$S_{1,K-1}$}}
\put(13, 8){\framebox(1.7, 1){$S_{1,K}$}}

\put(2.7,8.5){\vector(1, 0){1.3}}
\put(5.7,8.5){\vector(1, 0){1.3}}
\put(8.7,8.5){\vector(1, 0){1.3}}
\put(11.7,8.5){\vector(1, 0){1.3}}

\put(1.85,8){\vector(0, -1){1}}
\put(4.85,8){\vector(0, -1){1}}
\put(10.85,8){\vector(0, -1){1}}
\put(13.85,8){\vector(0, -1){1}}

\put(1, 6){\framebox(1.7, 1){$S_{2,1}$}}
\put(4, 6){\framebox(1.7, 1){$S_{2,2}$}}
\put(7, 6){\makebox(1.7, 1){$\ldots$}}
\put(10, 6){\framebox(1.7, 1){$S_{2,K-1}$}}
\put(13, 6){\framebox(1.7, 1){$S_{2,K}$}}

\put(2.7,6.5){\vector(1, 0){1.3}}
\put(5.7,6.5){\vector(1, 0){1.3}}
\put(8.7,6.5){\vector(1, 0){1.3}}
\put(11.7,6.5){\vector(1, 0){1.3}}

\put(1.85,6){\vector(0, -1){0.3}}
\put(4.85,6){\vector(0, -1){0.3}}
\put(10.85,6){\vector(0, -1){0.3}}
\put(13.85,6){\vector(0, -1){0.3}}

\put(1, 5){\makebox(1.7, 1){$\vdots$}}
\put(4, 5){\makebox(1.7, 1){$\vdots$}}
\put(7, 5){\makebox(1.7, 1){$\vdots$}}
\put(10, 5){\makebox(1.7, 1){$\vdots$}}
\put(13, 5){\makebox(1.7, 1){$\vdots$}}

\put(1, 4){\framebox(1.7, 1){$S_{L-1,1}$}}
\put(4, 4){\framebox(1.7, 1){$S_{L-1,2}$}}
\put(7, 4){\makebox(1.7, 1){$\ldots$}}
\put(10, 4){\framebox(1.7, 1){$S_{L-1,K-1}$}}
\put(13, 4){\framebox(1.7, 1){$S_{L-1,K}$}}

\put(2.7,4.5){\vector(1, 0){1.3}}
\put(5.7,4.5){\vector(1, 0){1.3}}
\put(8.7,4.5){\vector(1, 0){1.3}}
\put(11.7,4.5){\vector(1, 0){1.3}}

\put(1.85,4){\vector(0, -1){1}}
\put(4.85,4){\vector(0, -1){1}}
\put(10.85,4){\vector(0, -1){1}}
\put(13.85,4){\vector(0, -1){1}}

\put(1, 2){\framebox(1.7, 1){$S_{L,1}$}}
\put(4, 2){\framebox(1.7, 1){$S_{L,2}$}}
\put(7, 2){\makebox(1.7, 1){$\ldots$}}
\put(10, 2){\framebox(1.7, 1){$S_{L,K-1}$}}
\put(13, 2){\framebox(1.7, 1){$S_{L,K}$}}

\put(2.7,2.5){\vector(1, 0){1.3}}
\put(5.7,2.5){\vector(1, 0){1.3}}
\put(8.7,2.5){\vector(1, 0){1.3}}
\put(11.7,2.5){\vector(1, 0){1.3}}

\end{picture}
\caption{Scheme of refinement relations for Algorithm 2.}
\label{fig:f2}
\end{figure}

\begin{figure}
\begin{picture}(20,10)(0,0)
\put(1, 8){\framebox(1.7, 1){$S_{1,1}$}}
\put(4, 8){\framebox(1.7, 1){$S_{1,2}$}}
\put(7, 8){\makebox(1.7, 1){$\ldots$}}
\put(10, 8){\framebox(1.7, 1){$S_{1,K-1}$}}
\put(13, 8){\framebox(1.7, 1){$S_{1,K}$}}

\put(13.85, 8){\line(0, -1){0.5}}
\put(1.85, 7.5){\line(1, 0){12}}
\put(1.85, 7.5){\vector(0, -1){0.5}}

\put(2.7,8.5){\vector(1, 0){1.3}}
\put(5.7,8.5){\vector(1, 0){1.3}}
\put(8.7,8.5){\vector(1, 0){1.3}}
\put(11.7,8.5){\vector(1, 0){1.3}}

\put(1, 6){\framebox(1.7, 1){$S_{2,1}$}}
\put(4, 6){\framebox(1.7, 1){$S_{2,2}$}}
\put(7, 6){\makebox(1.7, 1){$\ldots$}}
\put(10, 6){\framebox(1.7, 1){$S_{2,K-1}$}}
\put(13, 6){\framebox(1.7, 1){$S_{2,K}$}}

\put(13.85, 6){\line(0, -1){0.15}}
\put(1.85, 5.85){\line(1, 0){12}}
\put(1.85, 5.85){\vector(0, -1){0.15}}

\put(13.85, 5.35){\line(0, -1){0.15}}
\put(1.85, 5.15){\line(1, 0){12}}
\put(1.85, 5.15){\vector(0, -1){0.15}}

\put(2.7,6.5){\vector(1, 0){1.3}}
\put(5.7,6.5){\vector(1, 0){1.3}}
\put(8.7,6.5){\vector(1, 0){1.3}}
\put(11.7,6.5){\vector(1, 0){1.3}}

\put(1, 5){\makebox(1.7, 1){$\vdots$}}
\put(4, 5){\makebox(1.7, 1){$\vdots$}}
\put(7, 5){\makebox(1.7, 1){$\vdots$}}
\put(10, 5){\makebox(1.7, 1){$\vdots$}}
\put(13, 5){\makebox(1.7, 1){$\vdots$}}

\put(1, 4){\framebox(1.7, 1){$S_{L-1,1}$}}
\put(4, 4){\framebox(1.7, 1){$S_{L-1,2}$}}
\put(7, 4){\makebox(1.7, 1){$\ldots$}}
\put(10, 4){\framebox(1.7, 1){$S_{L-1,K-1}$}}
\put(13, 4){\framebox(1.7, 1){$S_{L-1,K}$}}

\put(13.85, 4){\line(0, -1){0.5}}
\put(1.85, 3.5){\line(1, 0){12}}
\put(1.85, 3.5){\vector(0, -1){0.5}}

\put(2.7,4.5){\vector(1, 0){1.3}}
\put(5.7,4.5){\vector(1, 0){1.3}}
\put(8.7,4.5){\vector(1, 0){1.3}}
\put(11.7,4.5){\vector(1, 0){1.3}}

\put(1, 2){\framebox(1.7, 1){$S_{L,1}$}}
\put(4, 2){\framebox(1.7, 1){$S_{L,2}$}}
\put(7, 2){\makebox(1.7, 1){$\ldots$}}
\put(10, 2){\framebox(1.7, 1){$S_{L,K-1}$}}
\put(13, 2){\framebox(1.7, 1){$S_{L,K}$}}

\put(2.7,2.5){\vector(1, 0){1.3}}
\put(5.7,2.5){\vector(1, 0){1.3}}
\put(8.7,2.5){\vector(1, 0){1.3}}
\put(11.7,2.5){\vector(1, 0){1.3}}

\end{picture}
\caption{Scheme of refinement for Algorithm 3.}
\label{fig:f3}
\end{figure}
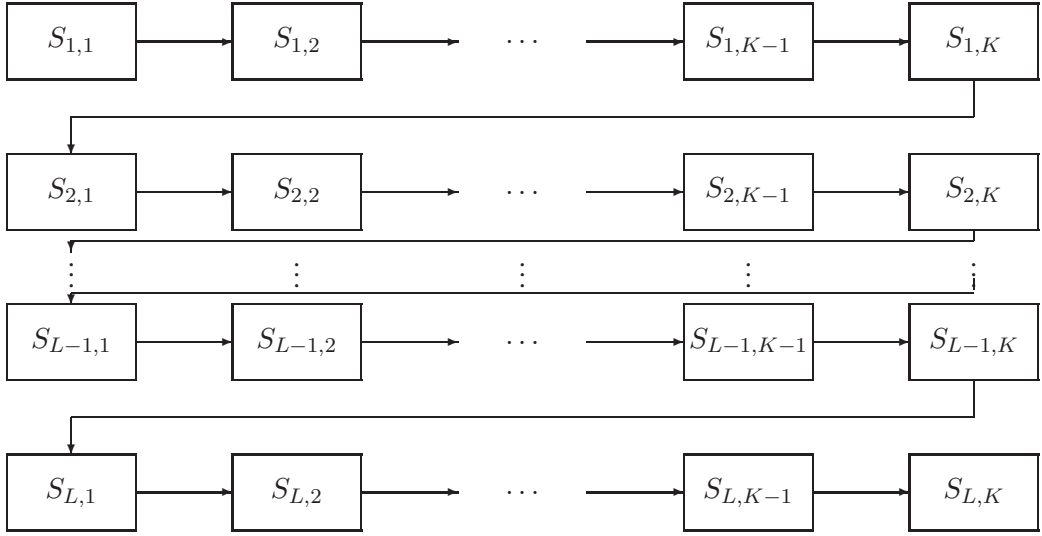

\newpage
\pdfbookmark[1]{References}{MyRefs}
\newcommand{\etalchar}[1]{$^{#1}$}

\newpage
\appendix

\section{Proof of Lemma \ref{lm:xd}}
\label{app:xd}
We denote the edges of $G$ in their order in the stream by $E=(e_1,\ldots, e_m)$. In what follows we shall treat edge sets as ordered sets, and for any
$E_1\subseteq E$ write $E\setminus E_1$ to denote the result of removing edges of $E_1$ from $E$ while preserving the order of the remaining edges. For a stream of edges $E$ we shall write $E_t$ to denote the set of the first $t$ edges in the stream.

For a $\kappa$-connected component $C$ of a graph $G$ we will write $|C|$ to denote the number of vertices in $C$. Also, we will denote the result of sampling the edges of $C$ uniformly at random with probability $p$ by $C'$. The following simple lemma will be useful in our analysis:

\begin{lemma} \label{lm:components-weak}
Let $C$ be a $\kappa$-connected component of $G$ for some positive integer $\kappa$. Denote the graph obtained by sampling edges of $C$ with probability $p\geq \lambda/\kappa$ by $C'$. Then the number of connected components in $C'$ is at most $\gamma |C|$ with probability at least $1-e^{-\eta|C|}$, where $\gamma=(7/8+e^{-\lambda/2}/8)$ and $\eta=1-e^{-\lambda/2}$.
\end{lemma}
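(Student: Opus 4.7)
The plan is to upper-bound the number of connected components of $C'$ by lower-bounding its rank (the size of a spanning forest): since $\#\text{components}(C') = |C| - \text{rank}(C')$, any acyclic subgraph of $C'$ with $r$ edges certifies at most $|C| - r$ components. I will construct such a forest whose size is a sum of \emph{independent} Bernoulli random variables, which then allows a direct Chernoff-type concentration argument.

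First I would fix an ordering $v_1, \ldots, v_n$ of $V(C)$ (with $n = |C|$, and the choice of ordering to be discussed next), and for each $i$ let $F_i$ be the set of edges from $v_i$ to $\{v_1,\ldots,v_{i-1}\}$. Since the sets $F_1, \ldots, F_n$ are pairwise disjoint, the indicators $X_i = \mathbf{1}[\text{some edge of } F_i \text{ is sampled}]$ are mutually independent Bernoullis. For each $i$ with $X_i = 1$, designating one sampled edge of $F_i$ as $v_i$'s ``parent edge'' produces an acyclic subgraph of $C'$ with $\sum_i X_i$ edges (following parent pointers always moves to a strictly earlier vertex, so there are no cycles), so $\text{rank}(C') \geq \sum_i X_i$ and hence $\#\text{components}(C') \leq n - \sum_i X_i$.

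Next I would choose the ordering so that many $v_i$ have $|F_i| \geq \kappa/2$, which guarantees $\Pr[X_i=1] \geq 1 - (1-p)^{\kappa/2} \geq 1 - e^{-p\kappa/2} \geq 1 - e^{-\lambda/2} = \eta$ for each such $i$. Under a uniformly random permutation of $V(C)$, the relative position of any vertex $v$ among $\{v\}\cup N(v)$ is uniform on $\{1,\ldots,d_v+1\}$, so $|F_v|$ is uniformly distributed on $\{0,1,\ldots,d_v\}$. Since $C$ is $\kappa$-connected, every $v$ has $d_v \geq \kappa$ (its singleton cut), so $\Pr[|F_v| \geq \kappa/2] \geq 1/2$, and by the probabilistic method there exists a fixed ordering in which at least $n/2$ vertices satisfy $|F_i| \geq \kappa/2$.

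Finally, fixing such an ordering and restricting attention to those $\geq n/2$ independent ``good'' Bernoullis $X_i$ (each with success probability $\geq \eta$), a standard Chernoff bound gives $\sum_i X_i \geq (1-\gamma)|C| = \eta|C|/8$ with probability $\geq 1 - e^{-\eta|C|}$; combined with the rank bound, this yields $\#\text{components}(C') \leq \gamma|C|$ as claimed. The main obstacle will be making the numerical constants line up exactly: the argument above has some slack that must be carefully absorbed into the Chernoff step, and a cleaner alternative framing is to partition $V(C)$ into a ``core'' $A$ of size about $7|C|/8$ and a ``periphery'' $B$ of size about $|C|/8$ chosen so that each $v \in B$ has $\geq \kappa/2$ neighbors in $A$, which more directly produces the stated $\gamma = 7/8 + e^{-\lambda/2}/8$ and $\eta = 1 - e^{-\lambda/2}$ from the same independent-Bernoullis template.
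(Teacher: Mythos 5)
Your proposal is correct and rests on the same structural skeleton as the paper's proof, but the two implementations differ in both the construction and the counting. Both arguments assign to a constant fraction of vertices of $C$ pairwise-disjoint edge bundles of size $\geq \kappa/2$, so that the indicators ``some bundle edge survives sampling'' are independent Bernoullis each with success probability $\geq 1-(1-p)^{\kappa/2}\geq 1-e^{-\lambda/2}=\eta$, and then concentrate via Chernoff. The paper obtains the bundles from a max-cut--type bipartition: a local-search argument yields $A,B$ with $|A|\geq|C|/2$ and every $v\in A$ sending at least half its $C$-edges (hence $\geq\kappa/2$ edges, by the degree bound) into $B$, so the bundles $\{v\}\times B$ are disjoint over $v\in A$. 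You instead fix a vertex ordering (found by averaging over random permutations) and take $F_i$ to be the backward edges of $v_i$; these are automatically disjoint over \emph{all} $i$, not just one side of a cut. The counting is also genuinely different: the paper shows at least $Y$ vertices lie in non-singleton components and concludes $\#\text{components}\leq|C|-Y/2$, whereas your parent-pointer forest gives the tighter and cleaner $\#\text{components}\leq|C|-\sum_iX_i$ directly from rank. Your closing ``core/periphery'' sketch is in spirit the paper's bipartition with the roles reversed. One caveat you correctly flag: the Chernoff step does not close to the stated constants (expected successes $\geq\eta|C|/2$ versus target $\eta|C|/8$ yields $e^{-c\eta|C|}$ for some $c<1$, not $e^{-\eta|C|}$); the paper's own invocation (claiming $\prob[Y\leq\mu/2]\leq e^{-2\mu}$) has the same looseness, so this is a shared cosmetic issue rather than a gap in your argument.
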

\begin{proof}
Choose $A, B\subset V(C)$ so that $A\cup B=V(C), A\cap B=\emptyset$, $|A|\geq |V(C)|/2$ and for every $v\in A$ at least half of its edges that go to vertices in $C$ go to $B$.
Note that such a partition always exists: starting from any arbitrary partition of vertices of $C$,
we can repeatedly move a vertex from one side to the other if it increases the number of edges
going across the partition, and upon termination, the larger side corresponds to the set $A$.
Denote by $Y$ the number of vertices of $A$ that belong to components of size at least $2$.
Note that $Y$ can be expressed as sum of $|A|$ independent $0/1$ Bernoulli random
variables. Let $\mu:=\expect[Y]$; we
have that $\mu \geq |A|(1-(1-\lambda/\kappa)^{\kappa/2})\geq |A|(1-e^{-\lambda/2})$.  We get by the Chernoff bound that $\prob[Y\leq |A|(1-e^{-\lambda/2})/2]\leq e^{-2\mu}\leq e^{-|C|(1-e^{-\lambda/2})}=e^{-\eta |C|}$.  Hence, at least a $(1-e^{-\lambda/2})/4$ fraction of the vertices of $C$ are in components of size at least $2$. Hence, the number of connected components is at most a $1-(1-e^{-\lambda/2})/8=7/8+e^{-\lambda/2}/8=\gamma$ fraction of the number of vertices of $C$.
\end{proof}

\begin{proofof}{Lemma \ref{lm:xd}}
The proof is by induction on $J$. We prove that w.h.p. for every $J=(l,k)$ one has $|E\setminus X^J|\leq \sum_{1\leq J'=(l',k')\leq J-1} c_1 2^{l'}n$ for a constant $c_1>0$.
\begin{description}
\item[Base: $J=1$] Since everything is connected in $D_{0}$ by definition, the claim holds.
\item[Inductive step: $J\to J+1$] The outline of the proof
is as follows. For every $J=(l,k)$ we consider the edges of the stream that the algorithm tries to add to $D_J$, identify a sequence of $2^l$-strongly connected components $C_0,C_1\ldots$ in the partially received graph, and use lemma \ref{lm:components-weak} to show that the number of connected components decreases fast because only a small fraction of vertices in the sampled $2^l$-strongly connected components are isolated. We thus show that, informally, it will take $O(2^ln)$ edges to make the connectivity data structure $D_{J}$ in Algorithm 3 connected. The connected components $C_s$ are defined by induction on $s$. The vertices of $C_s$ are elements of a partition $P_s$ of the vertex set $V$ of the graph $G$. We shall use an auxiliary sequence of graphs which we denote by $H_s^t$.

 Let $P_0$ be the partition consisting of isolated vertices of $V$. We treat the base case $s=0$ separately to simplify exposition. We use the definition of $\gamma$ and $\eta$ from lemma \ref{lm:components-weak} with $\lambda=1$ since we are considering $2^l$-connected components when $J=(k,l)$.
\begin{description}
\item[Base case: $s=0$.] Set $H_0^t=(P_0, \{e_1,\ldots, e_t\})$, i.e. $H_0^t$ is the partially received graph up to time $t$. Let $t_0^*$ be the the first value of $t$ such that $s_{H_0^{t}}(e_t)\geq 2^{l}$. This means that $e_{t_0^*}$ belongs to a $2^{l}$-strongly connected component in $H_0^{t_0^*}$. Note that this component does not contain any $(2^{l}+1)$-strongly connected components. Denote this component by $C_0$ (note that the number of edges in $C_0$ is at most $2^{l}|C_0|$ by lemma \ref{lm:k-weak}). Denote the random variables that correspond to sampling edges of $C_0$ by $R_0$. Let $X_0$ be an indicator variable that equals $1$ if the number of connected components in $C'_0$ is at most $\gamma|C_0|$ and $0$ otherwise. By lemma \ref{lm:components-weak} we have that $\prob[X_0=1]\geq 1-e^{-\eta|C_0|}$.

For a partition $P$  denote $\text{diag}(P)=\{(u, u): u\in P\}$. Define $P_1$ by merging partitions of $P_0$ that belong to connected components in $C'_0$ if $X_0=1$ and as equal to $P_0$ otherwise. Let $E^1=E\setminus (E(C_0)\cup \text{diag}(P_1))$, i.e. we remove edges of $C_0$ and also edges that connect vertices that belong to the same partition in $P_1$.
Note that we can safely remove these edges since their endpoints are connected in $D_J$ when they arrive. Define $H_1^t=(P_1, E^1_t)$, i.e. $H_1^t$ is the partially received graph on the modified stream of edges.

\item[Inductive step: $s\to s+1$.]
As in the base case, let $t_{s}^*$ be the the first value of $t$ such that $s_{H_{s}^{t}}(e_t)\geq 2^{l}$. This means that $e_{t_{s}^*}$ belongs to a $2^{l}$-connected component in $H_{s}^{t_{s}^*}$. Denote this component by $C_{s}$(note that the number of edges in $C_{s}$ is at most $2^{l}|C_{s}|$ by lemma \ref{lm:k-weak}). Denote the random variables that correspond to sampling edges of $C_s$ by $R_s$. Let $X_{s}$ be an indicator variable that equals $1$ if the number of connected components in $C'_{s}$ is at most $\gamma |C_{s}|$ and $0$ otherwise. By lemma \ref{lm:components-weak} we have that $\prob[X_{s}=1]\geq 1-e^{-\eta|C_{s}|}$. Define $P_{s+1}$ by merging together vertices that belong to connected components in $C'_{s}$. Let $E^{s+1}=E^{s}\setminus (E(C_{s})\cup \text{diag}(P_s))$. Denote
$H_s^t=(P_s, E^{s}_t)$.
\end{description}

It is important to note that at each step $s$ we only flip coins $R_s$ that correspond to edges in $E(C_{s})$, and delete only those edges from $E^{s}$.
While there may be edges going across partitions $P_s$ for which we do not perform a coin flip,
there number is bounded by $O(2^ln)$ since these edges do not contain a $2^l$-connected component.

Note that for any $s>0$ the number of connected components in $P_s$ is at most
\begin{equation*}
n-\sum_{j=1}^s (1-\gamma) |C_j| X_j.
\end{equation*}

We now show that it is very unlikely that $\sum_{j=1}^s|C_j| X_j$ is more than a constant factor smaller than $\sum_{j=1}^s|C_j|$, thus showing that the number of connected components cannot be more than $1$ when $\sum_{j=1}^s|C_j|\geq \frac{cn}{1-\gamma}$ for an appropriate constant $c>0$.

For any constant $d>0$ define $I^+=\{i\geq 0: |C_i|>((d+2)/\eta)\log n\}$ and $I^-=\{i\geq 0: |C_i|\leq ((d+2)/\eta)\log n\}$.
Also define $Z_i^+=\sum_{0\leq j\leq i, j\in I^+} X_j |C_j|, Z_i^-=\sum_{0\leq j\leq i, j\in I^-} X_j|C_j|-|C_j|(1-e^{-\eta |C_j|})$.

First note that one has $\prob[X_j=1]\geq 1-n^{-d-2}$ for any $j\in I^+$ by lemma \ref{lm:components-weak}. Hence, it follows by taking the union bound that $i\leq n^2$ one has $\prob[Z_i^+=\sum_{j\in I^+, j\leq i} |C_j|]\geq 1-n^{-d}$.

We now consider $Z_i^-$. Note that $Z_i^-$'s define a martingale sequence with respect to $R_{i-1},\ldots, R_0$: $\expect[Z_i^-|R_{i-1},\ldots, R_0]=Z_{i-1}^-$.
Also, $|Z_{i}^--Z_{i-1}^-|\leq ((d+2)/\eta)\log n$ for all $i$. Hence, by Azuma's inequality (see, e.g. \cite{b:alonspencer}) one has
\begin{equation*}
\prob[Z_{i}^-<t]<\exp\left(-\frac{t^2}{2 i (((d+2)/\eta)\log n)^2}\right).
\end{equation*}
Now consider the smallest value $\tau$ such that $\sum_{j\leq \tau} |C_j|=\sum_{j\leq \tau,j\in I^+} |C_j|+\sum_{j\leq i,j\in I^-} |C_j|=S^++S^-\geq \frac{4n}{(1-e^{-2\eta})(1-\gamma)}$.
Note that $\tau<n/(2(1-e^{-2\eta})(1-\gamma))$ since $|C_i|\geq 2$. If $S^+\geq \frac{2n}{(1-e^{-2\eta})(1-\gamma)}\geq 2n/(1-\gamma)$, then we have that $Z_\tau^+=S^+>2n/(1-\gamma)$ with probability at least $1-n^{-d}$. Thus,
\begin{equation*}
n-\sum_{j=1}^\tau (1-\gamma) |C_j| X_j\leq n-(1-\gamma) Z_\tau^+\leq 0.
\end{equation*}
Otherwise $S^-\geq \frac{2n}{(1-e^{-2\eta})(1-\gamma)}$ and by Azuma's inequality we have
\begin{equation*}
\prob[Z_{\tau}^-<-n]<\exp\left(-\frac{n^2}{2 \tau (((d+2)/\eta)\log n)^2}\right)\leq \exp\left(-\frac{n}{(((d+2)/\eta)\log n)^2}\right)<n^{-d}.
\end{equation*}
Since $|C_i|\geq 2$, we have $|C_i|(1-e^{-\eta |C_i|})\geq |C_i|(1-e^{-2\eta})$ and thus we get
\begin{equation*}
\begin{split}
n-\sum_{j=1}^\tau (1-\gamma) |C_j| X_j<n-(1-\gamma)  \left[\sum_{1\leq j\leq \tau, j\in I^-} |C_j|(1-e^{-\eta |C_j|})+Z_\tau^-\right]\\
<n-(1-\gamma)  \left[(1-e^{-2\eta})\sum_{1\leq j\leq \tau, j\in I^-} |C_j|+Z_\tau^-\right]\\
<n-(1-\gamma)  \left[\frac{2n}{1-\gamma}+n\right]<0\\
\end{split}
\end{equation*}

We have shown that there exists a constant $c'>0$ such that with probability at least $1-n^{-d}$ after $c' 2^l n$ edges are sampled by the algorithm at level $J$ all subsequent edges will have their endpoints connected in $D_J$. Note that we never flipped coins for those edges that did not contain a $2^{l}$-connected component.
Setting $c_1=c'+1$, we have that w.h.p. $|E\setminus X^J|\leq c_1 2^l n+|E\setminus X^{J-1}|$. By the inductive hypothesis we have that $|E\setminus X^{J-1}|\leq \sum_{1\leq J'=(l',k')\leq J-2} c_1 2^{l'}n$, which together with the previous estimate gives us the desired result.
\end{description}

It now follows that $|E\setminus X^J|\leq \sum_{1\leq J'=(l',k')\leq J-1} c_1 2^{l'}n=O(K2^ln)$ w.h.p., finishing the proof of the lemma.

\end{proofof}
\end{document}